\newcommand{\remove}[1]{{}}
\newcommand{\ignore}[1]{{}}
\newcommand{\rednote}[1]{{\color{red} #1}}
\newcommand{\bluenote}[1]{{\color{blue} #1}}
\newcommand{\changed}[1]{#1}
\definecolor{light-gray}{gray}{0.7}
\newtheorem{lemma}{Lemma}
\newtheorem{theorem}{Theorem}
\title{Visibility Graphs, Dismantlability, and the Cops and Robbers Game}
\author{Anna Lubiw\thanks{Cheriton School of Computer Science,
        University of Waterloo, Waterloo ON, N2L 3G1, Canada.  {\tt \{alubiw, hvosough\}@uwaterloo.ca}}
        \and
        Jack Snoeyink\thanks{Department of Computer Science, University of North Carolina, NC, USA. {\tt snoeyink@cs.unc.edu}}
        \and
        Hamideh Vosoughpour$^*$}
\begin{document}

\maketitle

\begin{abstract}  
We study versions of cop and robber pursuit-evasion games on the visibility graphs of polygons, and inside polygons with straight and curved sides.  Each player has full information about the other player's location, players take turns, and the robber is captured when the cop arrives at the same point as the robber.  
In visibility graphs  we show the cop can always win because visibility graphs are {\it dismantlable},  which is interesting as one of the few results relating visibility graphs to other known graph classes.  
We extend this to show that the cop wins games in which players move along straight line segments inside any  polygon
and, more generally, inside any simply connected planar region with a reasonable boundary. Essentially, our problem is a type of pursuit-evasion using the link metric rather than the Euclidean metric, and our result provides an interesting class of infinite cop-win graphs.
%
\end{abstract}

\section{Introduction}

Pursuit-evasion games have a rich history both for their mathematical interest and because of applications in surveillance, search-and-rescue, and mobile robotics. 
 In  pursuit-evasion games one player,  called the ``evader,'' tries to avoid capture by ``pursuers'' as all players move in some domain.
There are many game versions, depending on whether the domain is discrete or continuous, what information the players have, and how the players move---taking turns, moving with bounded speed, etc. 
 
This paper is about the ``cops and robbers game,'' a discrete version played on a graph, that was
first introduced in 1983 by
Nowakowski and Winkler~\cite{Nowakowski-83}, and Quilliot~\cite{Quilliot-83}.  
The cop and robber are located at vertices of a graph and take turns moving along edges of the graph.  
 The robber is caught when a cop moves to the vertex the robber is on.  
The standard assumption is that both players have full information about the graph and the other player's location.
The first papers on this game~\cite{Nowakowski-83,Quilliot-83} characterized the graphs in which the cop wins---they are the graphs with a  ``dismantlable'' vertex ordering. (Complete definitions are in Section~\ref{sec:defs}.)
Since then many extensions have been explored---see the book by Bonato and  Nowakowski~\cite{Bonato-book}.   (Note that the  cops and robbers version that Seymour and Thomas~\cite{seymour} develop to characterize treewidth is different:   the robber moves only along edges but at arbitrarily high speed, while a cop may jump to any graph vertex.)

\medskip\noindent
{\bf Our Results.}
We consider three successively more general versions of the cops and robbers game in planar regions.  
The first version is the cops and robbers game on the {\it visibility graph of a polygon}, which is a graph with a vertex for each polygon vertex, and an edge when two vertices ``see'' each other (i.e., may be joined by a line segment) in the polygon.  
We prove that this game is cop-win by proving that visibility graphs are dismantlable. As explained below, this result is implicit in~\cite{Aichholzer-convexifying-11}.   
We prove the stronger result 
that visibility graphs are 2-dismantlable. 
We remark that it is an open problem to characterize or efficiently recognize visibility graphs of polygons~\cite{ghosh2007visibility, ghosh2010Unsolved}, so this result is significant in that it places visibility graphs as a subset of a known and well-studied class of graphs.

Our second setting is the cops and robbers game on all points inside a polygon.
The cop chooses a point inside the polygon as its initial position, then the robber chooses its initial position.  Then the players take turns, beginning with the cop.  In each turn, a player may move to any point visible from its current location, i.e.,~it may move any distance along a straight-line segment inside the polygon.  The cop wins when it moves to the robber's position. 
We prove that the cop will win using the simple strategy of always taking the first step of a shortest path to the robber.  Thus the cop plays on the reflex vertices of the polygon.

Our third setting is the cops and robbers game on all points inside a bounded simply-connected planar region.  We show that if the boundary is well-behaved (see below) then the cop wins.   We give a strategy for the cop to win, although the cop can no longer follow the same shortest path strategy (e.g.~when it lies on a reflex curve), and can no longer win by playing on the boundary.

The cops and robbers game on all points inside a region can be viewed as a cops and robbers
game on an infinite graph---the graph has a vertex for each point inside the region, and an edge when two points see each other.   
These ``point visibility graphs''  were introduced by Shermer (see~\cite{Shermer}) for the case of polygons.
Our result shows that point visibility graphs are cop-win.
This provides an answer to Hahn's question~\cite{Hahn-survey} of finding an interesting class of infinite cop-win graphs.

The cops and robbers game on all points inside a region can be viewed as pursuit-evasion under a different metric, and could appropriately be called ``straight-line pursuit-evasion.'' 
Previous work~\cite{Isler-randomized-05,Bhadauria-12}  considered a pursuit-evasion game in a polygon (or polygonal region) where the players
are limited to moving distance 1 in the Euclidean metric on each turn.  In our game, the players are limited to distance 1 in the \emph{link metric}, where the length of a path is number of line segments in the path.
This models a situation where changing direction is  costly but straight-line motion is easy. 
Mechanical robots cannot make instantaneous sharp turns 
so exploring a model where all turns are expensive is a good first step towards a more realistic analysis of pursuit-evasion games with turn constraints.
We also note that the protocol of the players taking alternate turns is more natural in the link metric than in the Euclidean metric.


\section{Related Work}
\label{sec:related-work}

\noindent{\bf Cops and Robbers.}
The cops and robbers game was introduced by Nowakowski and Winkler~\cite{Nowakowski-83}, and Quilliot~\cite{Quilliot-83}.
They characterized  the  finite graphs where one cop can capture the robber (``cop-win'' graphs) as ``dismantlable'' graphs, which can be recognized efficiently.
They also studied infinite cop-win graphs.  
Aigner and Fromme~\cite{Aigner-84} introduced the  \emph{cop number} of a graph, the minimum number of cops needed to catch a robber.  The rule with multiple cops is that they all move at once.
Among other things Aigner and Fromme proved that three cops are always sufficient and sometimes necessary for planar graphs.
Beveridge et al.~\cite{Beveridge-geometric-12} studied \emph{geometric graphs} (where vertices are points in the plane and an edge joins points that are within distance 1) and show that 9 cops suffice, and 3 are sometimes necessary. 
Meyniel conjectured that  $O(\sqrt n)$ cops can catch a robber in any graph on $n$ vertices~\cite{Baird-Meyniel-12}.
For any fixed $k$ there is a polynomial time algorithm to test if
$k$ cops can catch a robber in a given graph, but the problem is NP-complete for general $k$~\cite{Fomin-tractability-08}, and EXPTIME-complete for directed graphs~\cite{Goldstein-95}. 
The cops and robbers game on infinite graphs was 
studied in the original paper~\cite{Nowakowski-83} and others, e.g.~\cite{Bonato-capture-09}.

In a cop-win graph with $n$ vertices, the cop can win in at most $n$ moves.
This result is implicit in the original papers, but a clear exposition can be found in the book of Bonato and Nowakowski~\cite[Section 2.2]{Bonato-book}.
For a fixed number of cops, the number of cop moves needed to capture a robber in a given graph can be computed in 
polynomial time~\cite{Hahn-MacGillivray-06}, but the problem becomes NP-hard in general~\cite{Bonato-capture-09}.


\noindent{\bf Pursuit-Evasion.}
In the cops and robbers game, space is discrete.  
For continuous spaces, a main focus has been on polygonal regions, i.e.,~a region bounded by a polygon with polygonal holes removed.
The seminal 1999 paper by Guibas et al.~\cite{Guibas-visibility-99} 
concentrated on ``visibility-based'' pursuit-evasion
where the evader is arbitrarily fast and the pursuers do not know the evader's location and must search the region until they make line-of-sight contact.   This  models the scenario of agents searching the floor-plan of a building to find a smart, fast intruder that can be zapped from a distance.
Guibas et al.~\cite{Guibas-visibility-99} showed that $\Theta(\log n)$ pursuers are needed in a simple polygyon, and more generally they bounded  the number of pursuers in terms of the number of holes in the region.   
If the pursuers have the power to make random choices, Isler et al.~\cite{Isler-randomized-05} showed that only one guard  is needed for a polygon.
For a survey on pursuit-evasion in polygonal regions, see~\cite{Chung-pursuit-survey-11}.

The two games (cops and robbers/visibility-based pursuit-evasion) make opposite assumptions on five criteria: 
space is discrete/continuous; 
the pursuers succeed by capture/line-of-sight;
the pursuers have full information/no information; 
the evader's speed is limited/ unlimited; 
time is  discrete/continuous (i.e.,~the players take turns/move continuously).   

The difference between players taking turns and moving continuously can be vital,
 as revealed in Rado's Lion-and-Man problem from the 1930's (see Littlewood~\cite{Littlewood}) where the two players are inside a circular arena and move with equal speed.  The lion wins in the turn-taking protocol, but---surprisingly---the man can escape capture if both players move continuously. 
 
Bhaduaria et al.~\cite{Bhadauria-12} consider a pursuit-evasion game using a model very similar to ours.  
Each player knows the others' positions
(perhaps from a surveillance network) and the goal is to actually capture the evader.  Players have equal speed and take turns. 
In a polygonal region they show that 3 pursuers can capture an evader in $O(n d^2)$ moves where $n$ is the number of vertices and $d$ is the diameter of the polygon.  They also give an example where 3 pursuers are needed.
In a simple polygon they show that 1 pursuer can capture an evader in $O(n d^2)$ moves.  
This result, like ours, can be viewed as a result about a cop and robber game on an infinite graph.
The graph in this case has a vertex for each point in a polygon, and an edge 
joining any pair of points at distance at most 1 in the polygon.
To the best of our knowledge, the connection between this result and  cops and robbers on (finite) geometric graphs~\cite{Beveridge-geometric-12} has not been explored.

There is also a vast literature on graph-based pursuit-evasion games, where players move continuously and have no knowledge of other players' positions.  
The terms  ``graph searching'' and ``graph sweeping'' are used, and the concept is related to tree-width.  For surveys see~\cite{Alspach-04,Fomin-bibliography}.

\noindent{\bf Curved Regions.}
Traditional algorithms in computational geometry deal with points and piecewise linear subspaces (lines, segments, polygons, etc.). 
The study of algorithms for curved inputs was initiated by Dobkin and Souvaine~\cite{souvaine1990}, who defined the widely-used splinegon model.
A \emph{splinegon} is a simply connected region
 formed by replacing each edge of a simple polygon by a curve of constant complexity such that the area bounded by the curve and the edge it replaces is convex.  
The standard assumption is that it takes constant time to perform
primitive operations such as finding the intersection of a line with a splinegon edge or computing common tangents of two splinegon edges. 
This model is widely used as the standard model for curved planar environments in different studies.

Melissaratos and Souvaine~\cite{melissaratos1992} gave a linear time algorithm to find a shortest path between two points in a splinegon.  Their algorithm is similar to shortest path finding in a simple polygon but uses a trapezoid decomposition in place of polygon triangulation.
For finding shortest paths among curved obstacles (the splinegon version of a polygonal domain) there is recent work~\cite{Chen:2013:CSP}, and also more efficient algorithms when the curves are more specialized~\cite{pseudodisks2013,Hershberger:2013}.

\remove{
\bluenote{A draft for here: H.V.}
Traditional algorithms in computational geometry deal with points and piece wise linear subspaces  (lines, segments, polygons, etc.). For real applications we need to handle the environments with curved boundaries. However, the complexity of the curved boundary highly effects the complexity of algorithms. There are some studies those try to solve the problems on curves. In \cite{pseudodisks2013, disks1988} the problem of finding shortest path on special curves like disks and pseudodisks is studied. Some heuristic approaches approximate the curved boundaries by polygons and use the algorithms designed for polygonal domains \cite{Hershberger:2013}. The study of algorithms for more general curved inputs initiated by Dobkin and Souvaine in 1990 \cite{souvaine1990}. They modeled curved boundaries with a finite set of convex splines and call them ``splinegons''. Each spline, like lines, is in $O(1)$ complexity, so primitive operations such as finding the intersection of a line with a spline or the common tangents of two splines can be computed in constant time. This model is widely used as a standard model for curved planar environments in different studies. 

Melissaratos and Souvaine developed a linear time algorithm to find shortest path between two points in a splinegon \cite{melissaratos1992}. They introduced a trapezoid decomposition of the curved region that is a substitute to polygon triangulation in $O(n)$ time. Using the trapezoid decomposition the shortest path in a splinegon can be computed in $O(n)$ time with an algorithm similar to shortest path finding in a simple polygon. 
D.Z.~Chen and H.~Wang \cite{Chen:2013:CSP} developed an $O(n + h\log^{1+\epsilon} h + k)$ algorithm to find shortest path in a splinegon with $h$ splines and $n$ vertices, where $k$ is the number of all common tangents inside the region and is in $O(h^2)$. The algorithm works  for any arbitrary small $\epsilon$. They also proposed an algorithm in $O(n + k + h\log h)$ time to find all common tangents in a splinegon.
}

\section{Preliminaries}\label{sec:defs}

For a vertex $v$ of a graph,  we use $N[v]$ to denote the \emph{closed neighbourhood} of $v$, which consists of $v$ together with the vertices adjacent to $v$.  Vertex $v$ \emph{dominates} vertex $u$ if $N[v] \supseteq N[u]$. 

A graph $G$ is \emph{dismantlable} if it has a vertex ordering $\{v_1, v_2, \dots, v_n\}$ such that for each $i<n$, there is a vertex $v_j$, $j>i$  that dominates $v_i$ in the graph $G_i$ induced by $\{v_{i},\dots, v_n\}$.

We regard a polygon as a closed set of points, the interior plus the boundary.
Two points in a polygon
are \emph{visible} or \emph{see} each other if the line segment between them lies inside the polygon.
The line segment 
may lie partially or totally on the boundary of the polygon.
The \emph{visibility graph} of a polygon has the same vertex set as the polygon and an edge between any pair of vertices that see each other in the polygon.
For any point $x$ in polygon $P$, the \emph{visibility polygon} of $x$, $V(x)$, is the set of points 
in $P$ visible from $x$.
Note that $V(x)$ may fail to be a simple polygon---it may have 1-dimensional features on its boundary in certain cases where $x$ lies on a line through a pair of vertices.

\ignore{
Fig.~\ref{fig:visibilityPolygon} shows two examples of the visibility polygon of point $x$.

\begin{figure}[h]
	\centering
	\subfloat[\label{subfig:visibilityPolygon:a}]{%
		\includegraphics[width=.3\textwidth]{fig-1-1.pdf}}

	\quad
	\subfloat[\label{subfig:visibilityPolygon:b}]{%
		\includegraphics[width=.3\textwidth]{fig-1-2.pdf}}
	
		\caption{The visibility polygon of $x$.}
	\label{fig:visibilityPolygon}
\end{figure}
} 


For points $a$ and $b$ in polygon $P$, we say that $a$ \emph{dominates} $b$ if $V(a) \supseteq V(b)$.   
Note that we are using ``dominates'' both for vertices in a graph (w.r.t.~neighbourhood containment) and for points in a polygon (w.r.t.~visibility polygon containment).    
For vertices $a$ and $b$ of a polygon, if $a$ dominates $b$ in the polygon then $a$ dominates $b$ in the visibility graph of the polygon, but not conversely.

\section{Cops and Robbers in Visibility Graphs}
\label{sec:visibility-graph}

In this section we show that the visibility graph of any polygon is cop-win by showing that any such graph is dismantlable.

This result is actually implicit in the work of Aichholzer et al.~\cite{Aichholzer-convexifying-11}.
They defined an
edge $uv$ of polygon $P$ to be \emph{visibility increasing} if
for every two points $p_1$ and $p_2$ in order along the edge $uv$
the visibility polyons nest: $V(p_1) \subseteq V(p_2)$.
In particular, this implies that $v$ dominates every point on the edge, and that $v$ dominates $u$ in the visibility graph.
Aichholzer et~al.\ showed that every polygon has a visibility-increasing edge.  
It is straight-forward to show 
that visibility graphs are dismantlable based on this result.
\begin{lemma}  
\label{lemma:dismantlable}
The visibility graph $G$ of any polygon $P$ is dismantlable.
\end{lemma}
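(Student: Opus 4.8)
The plan is to induct on the number $n$ of vertices of $P$, building a dismantling order of $G$ one vertex at a time from the front. The base case $n=3$ is immediate, since the visibility graph of a triangle is $K_3$, which is dismantlable. For the inductive step I would invoke the theorem of Aichholzer et al.~\cite{Aichholzer-convexifying-11} to obtain a visibility-increasing edge $uv$ of $P$; as recorded above, this makes $v$ dominate $u$ in $G$. I would therefore take $u$ as the first vertex of the dismantling order, and everything reduces to a single claim: \emph{$G-u$ is itself the visibility graph of a polygon with $n-1$ vertices}. Granting the claim, $G-u$ is dismantlable by induction, and prepending $u$ to any dismantling order of $G-u$ gives one for $G$: the requirement at the first position is met because $v$ dominates $u$ in $G=G_1$, and all later requirements are exactly those witnessed inside $G-u$.

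To prove the claim, let $w$ be the neighbour of $u$ on $\partial P$ other than $v$, and form $P'$ by \emph{cutting the ear at $u$}: replace the two boundary edges $wu$ and $uv$ by the single segment $\overline{wv}$, i.e.\ set $P'=P\setminus\operatorname{int}(\triangle uvw)$. First I would check $\triangle uvw\subseteq P$: since $u$ sees $w$ (a polygon edge) and $V(u)\subseteq V(v)$, also $v$ sees $w$, so all three sides of the triangle lie in $P$; as $P$ is simply connected, a triangle whose boundary lies in $P$ lies entirely in $P$, and (for a simple polygon) its open interior is disjoint from $\partial P$. Because the triangle meets $\partial P$ only along the connected arc $w$--$u$--$v$, the set $P'$ is again a simple polygon, now with vertex set $V(P)\setminus\{u\}$. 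A useful by-product: since $\operatorname{int}(\triangle uvw)\subseteq\operatorname{int}(P)$, the interior angle of $P$ at $u$ cannot be reflex, so $u$ is a convex vertex and $\triangle uvw$ is a genuine ear.

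The heart of the argument is that $P'$ has visibility graph exactly $G-u$. One containment is free, since $P'\subseteq P$ means no new edges appear. For the other, suppose $a,b$ are vertices of $P$ distinct from $u$ with $\overline{ab}\subseteq P$, and suppose for contradiction that $\overline{ab}$ meets $\operatorname{int}(\triangle uvw)$; then $\overline{ab}\cap\triangle uvw$ is a subsegment $[p,q]$ whose relative interior lies in $\operatorname{int}(\triangle uvw)$, and each of $p,q$ is an endpoint of $\overline{ab}$ or a point of $\partial(\triangle uvw)$. Such a point cannot lie in the relative interior of the polygon edge $wu$ or $uv$ (crossing it transversally forces $\overline{ab}$ out of $P$, and a vertex cannot lie in the interior of a polygon edge); it cannot be $u$ (a line through the convex vertex $u$ that pokes into the wedge $\angle wuv$ on one side of $u$ lies outside $P$ on the other side); and it cannot be that both $p$ and $q$ lie on the chord side $\overline{vw}$, since then the two distinct segments $\overline{ab}$ and $\overline{vw}$ would meet in two points, forcing collinearity, whence $\overline{ab}\cap\triangle uvw$ lies on $\partial(\triangle uvw)$ rather than its interior. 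Hence no edge of $G$ among the surviving vertices is lost, the visibility graph of $P'$ is exactly $G-u$, and the induction closes.

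The step I expect to be the main obstacle is the careful handling of degenerate, collinear configurations — three collinear vertices, an edge of $P$ lying along $\overline{vw}$, or a visibility polygon with the one-dimensional spurs mentioned in Section~\ref{sec:defs} — where ``crosses transversally'', ``cutting an ear'', and even ``$P'$ is a simple polygon'' must be made precise. One would like to reduce to general position by a small perturbation of $P$, but this needs care, since perturbing a polygon can both create and destroy edges of its visibility graph; so I would instead dispatch these cases by direct inspection, which is tedious but routine.
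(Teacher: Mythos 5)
Your proof is correct and takes essentially the same route as the paper's: induct on the number of vertices, use the visibility-increasing edge $uv$ of Aichholzer et al.\ to obtain a dominated vertex $u$, and cut off the ear at $u$ to reduce to a smaller polygon whose visibility graph is $G-u$. The only difference is that you verify in detail the step the paper merely asserts, namely that deleting $\triangle uvw$ destroys no visibility edge among the surviving vertices.
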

\begin{proof}
By induction on the number of vertices of the polygon.  
Let $uv$ be a  visibility-increasing edge, which we know exists by the result of Aichholzer et al. 
Then vertex $v$ dominates $u$ in the visibility graph $G$.  We will construct a dismantlable ordering starting with vertex $u$.  

It suffices to show that $G - u$ is dismantlable. Let $tu$ and $uv$ be
the two polygon edges incident on $u$. 
We claim that the triangle $tuv$  is contained in the
polygon:  $u$ sees $t$ on the polygon boundary, so $v$ must also see
$t$. (Triangle $tuv$ is an ``ear'' of the polygon.)
Removing triangle $tuv$ yields a smaller polygon whose visibility graph is $G-u$.  By induction, $G-u$ is dismantlable.
\end{proof} 

Aichholzer et al.~\cite{Aichholzer-convexifying-11} conjectured that a polygon always has at least two visibility-increasing edges.  
In the remainder of this section we prove this conjecture, thus giving a simpler proof of their result and also proving 
that visibility graphs of polygons are \emph{2-dismantlable}.  Bonato et al.~\cite{Bonato-capture-09} define a graph $G$ to be 2-\emph{dismantlable} if it either has fewer than 7 vertices and is cop-win or it has at least two vertices $a$ and $b$ such that each one is dominated by a vertex other than $a,b$, and such that $G-\{a,b\}$ is 2-dismantlable.
They show that if an $n$-vertex graph is 2-dismantlable then the cop wins in at most $\frac{n}{2}$ moves by choosing the right starting point.

We need a few more definitions.
Let $P$ be a simple polygon, with an edge $uv$
where $v$ is a reflex vertex.
Extend the directed ray from $u$ through $v$ and let $t$ be the first boundary point of $P$ beyond $v$ that the ray hits.
The points $v$ and $t$ divide the boundary of $P$ into two paths. Let $\sigma$ be the path that does not contain $u$.
The simple polygon formed by $\sigma$ plus the edge $vt$ is called a \emph{pocket} and denoted Pocket$(u,v)$.
The segment $vt$ is the \emph{mouth} of the pocket.  Note that $u$ does not see any points inside Pocket$(u,v)$ except points on the line that contains the mouth.
See Fig.~\ref{fig:1} for examples,  including some with collinear vertices, which will arise in our proof.
Pocket($u,v)$ is \emph{maximal} if no other pocket properly contains it.
Note that a non-convex polygon has at least one pocket, and therefore at least one maximal pocket.  This will be strengthened to two maximal pockets in Lemma~\ref{lemma:pockets} below.  

\begin{figure}[h]
\centering
\includegraphics[width=.6\textwidth]{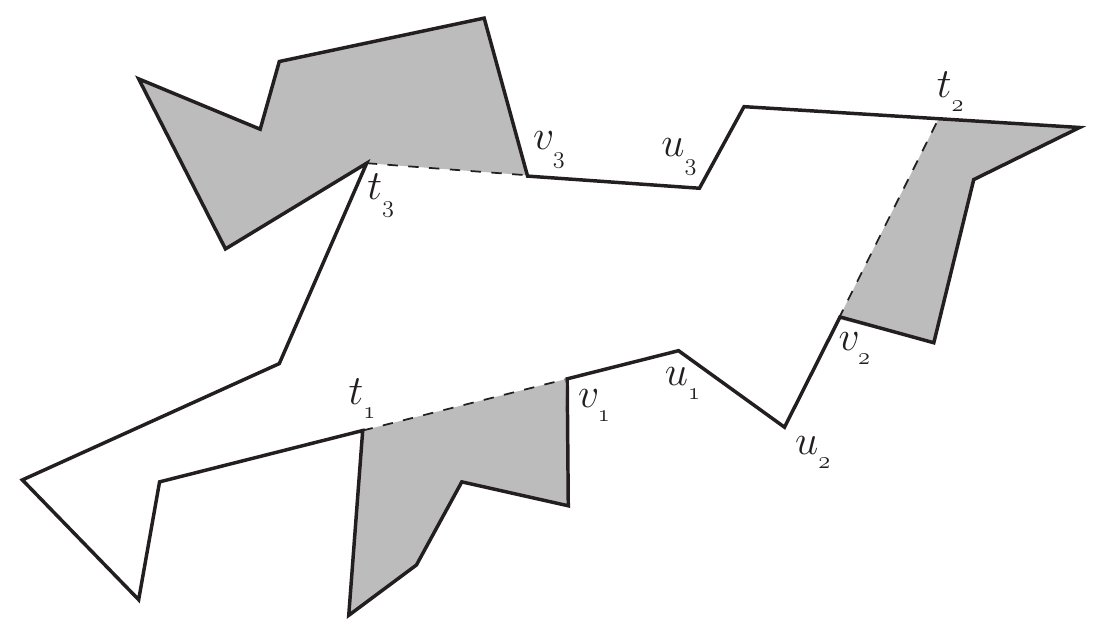}
\caption{Pocket$(u_i,v_i), i=1,2,3$, shaded.}
\label{fig:1}
\end{figure}

To prove that the visibility graph of a  polygon  is 2-dismantlable we prove that  a maximal pocket in the polygon provides a visibility-increasing edge and that every nonconvex polygon has at least two maximal pockets.


\begin{lemma}
\label{lemma:maxpocket}
If $uv$ is an edge of a polygon and Pocket$(u,v)$ is maximal then $uv$ is a visibility-increasing edge.
\end{lemma}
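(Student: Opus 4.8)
The plan is to argue by contradiction: assume Pocket$(u,v)$ is maximal but $uv$ is \emph{not} visibility-increasing, and then produce a pocket strictly containing Pocket$(u,v)$. Choose coordinates so that the line through $u,v,t$ is the $x$-axis, with $u$ to the left of $v$, $t$ to the right of $v$, and $P$ locally above the relative interior of $uv$. Failure of the visibility-increasing property yields points $p_1,p_2$ on $uv$, with $p_1$ between $u$ and $p_2$, and a point $q$ with $q\in V(p_1)\setminus V(p_2)$.

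The first step is to locate $q$. Since $\overline{p_1q}\subseteq P$ and $P$ agrees locally with the closed upper half-plane near $p_1$, the point $q$ lies in the closed upper half-plane, and it cannot lie on the $x$-axis (otherwise $\overline{p_2q}\subseteq\overline{p_1q}\cup\overline{p_1p_2}\subseteq P$ and $p_2$ would see $q$); so $q_y>0$. Moreover any segment from a point of $uv$ to $q$ meets the $x$-axis only at its endpoint on $uv$, hence does not meet the chord $\overline{vt}$, so $\overline{p_1q}$ stays in the component of $P\setminus\overline{vt}$ other than the pocket; in particular $q\notin$ Pocket$(u,v)$, which is what will eventually force strictness.

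Next I would extract the obstructing vertex. Sliding $p$ from $p_1$ toward $v$, the set of positions from which $q$ is visible is closed and misses $p_2$, so there is a last such position $c$, strictly before $v$. At $p=c$ the segment $\overline{cq}$ lies in $P$ but is tangent to $\partial P$, and because $\overline{pq}$ shifts in a fixed direction (parallel to $uv$) near the tangency point as $p$ advances toward $v$, the tangency must be at a reflex vertex $r$ of $P$ with $r$ strictly between $c$ and $q$, with $\overline{cr},\overline{rq}\subseteq P$, and with the exterior wedge of $P$ at $r$ lying on the side of the line $\overline{cq}$ toward $v$; the collinear configurations are treated separately (cf.\ Fig.~\ref{fig:1}). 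Applying the previous paragraph with $c$ in place of $p_1$ gives $\overline{cq}\subseteq P\setminus$ Pocket$(u,v)$, and since $r$ lies strictly above the $x$-axis, $r\ne v,t$ and $r\notin$ Pocket$(u,v)$.

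The final step, and the main obstacle, is to assemble the larger pocket. Both edges of $P$ at $r$ lie on the $v$-side of the line $\overline{cq}$; let $e=\overline{rs}$ be the one facing the $c$-side — the edge the sight line $\overline{pq}$ crosses just before passing $r$ when $p$ moves beyond $c$ — and let $\overline{rt'}$ be the mouth of the (now well-defined) pocket Pocket$(s,r)$. I claim Pocket$(u,v)\subsetneq$ Pocket$(s,r)$. Since the chain $\sigma$ joins $v$ to $t$ within $\partial P$ and $\partial P$ cannot meet the interior of the chord $\overline{rt'}$, it suffices to show that $v$, $t$, and $q$ all lie in the component of $P\setminus\overline{rt'}$ not containing $s$: then $\sigma$, hence all of Pocket$(u,v)$, lies in Pocket$(s,r)$, and $q\notin$ Pocket$(u,v)$ makes the inclusion strict, contradicting maximality. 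Proving this — making the right choice of $e$ and then carrying out the side-of-line and connectivity bookkeeping that places $v$, $t$, $q$ inside Pocket$(s,r)$, while correctly handling the degenerate collinear cases (which is exactly where the earlier remarks about collinear vertices and non-simple visibility polygons come in) — is where essentially all of the effort goes; the preceding steps are routine visibility arguments.
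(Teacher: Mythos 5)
Your overall plan coincides with the paper's: prove the contrapositive by producing a witness $q\in V(p_1)\setminus V(p_2)$, slide $p$ to the last position $c$ that sees $q$, extract the blocking reflex vertex $r$ on $\overline{cq}$, and exhibit a pocket at $r$ properly containing Pocket$(u,v)$. The problem is that the one concrete decision you commit to in the last step --- the choice of the edge $e=\overline{rs}$ --- is the wrong one, and everything hinges on it. You take $e$ to be the edge ``facing the $c$-side,'' i.e.\ the edge that the sight line $\overline{pq}$ crosses just below $r$ once $p$ passes $c$. Trace the boundary: the chord $\overline{cq}$ is internally tangent at $r$ and splits the $v$-side of $P$ into a lobe between $c$ and $r$ and a lobe between $r$ and $q$; your edge $\overline{rs}$ is precisely the first edge of the boundary arc bounding the $c$--$r$ lobe, the arc that runs from $r$ through $s$, around to the mouth of Pocket$(u,v)$, through $\sigma$, to $v$ and on to $c$. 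Since Pocket$(s,r)$ is by definition bounded by the boundary path from $r$ to $t'$ \emph{not} containing $s$, it lies on the $q$-side of $r$ and contains neither $v$, nor $t$, nor any of Pocket$(u,v)$; the claimed inclusion Pocket$(u,v)\subsetneq$ Pocket$(s,r)$ is simply false, and no contradiction with maximality results. The correct choice is the \emph{other} edge at $r$: the paper takes $u'$ to be the first neighbour of $r$ on the boundary path from $r$ to the witness that does \emph{not} contain $v$, so that Pocket$(u',r)$ cuts off the side containing $v$, $\sigma$, and the edge $uv$. Note also that with the correct choice the new pocket does \emph{not} contain $q$, so your mechanism for strictness ($q\in$ Pocket$(s,r)\setminus$ Pocket$(u,v)$) is unavailable; strictness comes instead from the new pocket containing the point $c$ (an interior point of edge $uv$, hence outside Pocket$(u,v)$).

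Two further gaps. First, your opening step asserts that ``$P$ agrees locally with the closed upper half-plane near $p_1$,'' which fails when $p_1=u$ and $u$ is a reflex vertex: then $q$ need not lie in the upper half-plane and the whole lobe analysis breaks. The paper disposes of this case up front --- if $u$ is reflex with other neighbour $w$, then Pocket$(w,u)$ already properly contains Pocket$(u,v)$ --- and thereafter assumes $u$ convex. Second, you explicitly defer the side-of-line and connectivity bookkeeping that would establish the containment, while acknowledging it is ``where essentially all of the effort goes''; since that is exactly where your edge choice goes wrong, the proposal as written is an outline whose key step is incorrect rather than a proof.
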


Aichholzer et al.~\cite[Lemma 2]{Aichholzer-convexifying-11} essentially proved this although it was not expressed in terms of maximal pockets.  
Also they assumed the polygon has no three collinear vertices.  We include a proof.

\begin{proof}  
We prove the contrapositive.
Suppose that edge $uv$ is not visibility-increasing.
If $u$ is a reflex vertex with next neighbour $w$, say, then Pocket$(w,u)$ properly contains Pocket$(u,v)$, which implies that Pocket$(u,v)$ is not maximal.  Thus we may assume that $u$ is convex.
Since $uv$ is not visibility-increasing there are two points $p_1$ and $p_2$ in order along $uv$ such that 
the visibility polygon of $p_1$ is not contained in the visibility polygon of $p_2$. 
Thus there is a point $t$ which is visible to $p_1$ but not visible to $p_2$.
See Fig.~\ref{subfig:maxpocket:a}.   
By extending the segment $p_1 t$, we may assume, without loss of generality, that $t$ is on the polygon boundary.
We claim that $t$ lies in the closed half-plane bounded by the line through $uv$ and lying on the opposite side of Pocket$(u,v)$.
This is obvious if  $p_1$ is internal to edge $uv$, and if $p_1=u$ it follows because $u$ is convex.
Furthermore, $t$ cannot lie on the line through $u,v$ otherwise $p_2$ would see $t$.

Now move point $p$ from $p_1$ to $p_2$ stopping at the last point where $p$ sees $t$.
See Fig.~\ref{subfig:maxpocket:b}.
There must be a reflex vertex $v'$ on the segment $tp$. 
The points $v'$ and $t$ divide the polygon boundary into two paths.  Take the path that does not contain $v$, and let $u'$ be the first neighbour of $v'$ along this path.  It may happen that $u' = t$.  
Then, as shown in Fig.~\ref{subfig:maxpocket:b}, Pocket$(u'v')$ 
properly contains 
Pocket$(u,v)$, so Pocket$(u,v)$ is not maximal.

\begin{figure}[ht]
	\centering
	{\hfill\subfigure[]{\includegraphics[width=.3\textwidth]{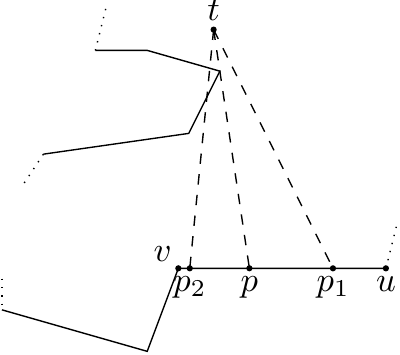}\label{subfig:maxpocket:a}}\hfill
	\subfigure[]{\includegraphics[width=.3\textwidth]{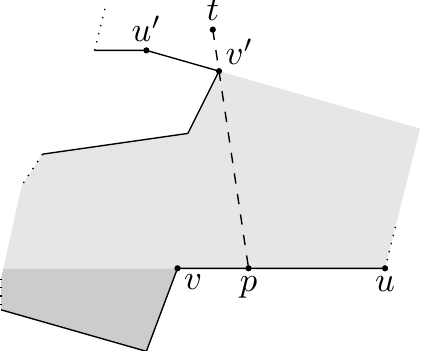}\label{subfig:maxpocket:b}}\hfill}
	\caption{If $uv$ is not visibility-increasing then Pocket$(u,v)$ is not maximal.}
	\label{fig:maxpocket}
\end{figure}

\end{proof}

\begin{lemma}
\label{lemma:pockets}
Any polygon that is not convex has two maximal pockets Pocket($u_1, v_1)$ and Pocket$(u_2, v_2)$ 
where $u_1$ does not see $u_2$.
\end{lemma}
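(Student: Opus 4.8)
The plan is to start from one maximal pocket and construct a second one whose base is invisible from the first base. Since $P$ is not convex it has a pocket, and as there are only finitely many pockets (one per edge incident to a reflex vertex), ordered by containment of their regions, $P$ has a maximal pocket Pocket$(u_1,v_1)$; let $u_1$ and $z_1$ be the two polygon-neighbours of the reflex vertex $v_1$. The key move is to consider the \emph{other} pocket at $v_1$, namely Pocket$(z_1,v_1)$: its defining boundary path, which starts at $v_1$ and avoids $z_1$, leaves $v_1$ along the edge $v_1u_1$, so it runs through $u_1$ and then on to $u_1$'s other neighbour. (Here I use that $v_1$ is strictly reflex, so $z_1,v_1,u_1$ are not collinear and the ray from $z_1$ through $v_1$ does not stop at $u_1$.) Now enlarge Pocket$(z_1,v_1)$ to a maximal pocket Pocket$(u_2,v_2)$ whose region contains it.

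The bridge I need is the monotonicity statement: \emph{if the region of one pocket is contained in the region of another, then the boundary path of the first is a sub-arc of the boundary path of the second.} This is immediate, because a pocket's boundary path lies on $\partial P$ whereas the larger pocket meets $\partial P$ in exactly its own boundary path (the mouth being an interior chord). Applying it to Pocket$(z_1,v_1)\subseteq$ Pocket$(u_2,v_2)$, the vertex $u_1$ — together with both of its polygon-neighbours — lies on the boundary path of Pocket$(u_2,v_2)$, so $u_1$ is an \emph{interior} vertex of that path; in particular $u_1$ equals neither $v_2$ nor the far endpoint of the mouth of Pocket$(u_2,v_2)$. The same statement forces Pocket$(u_2,v_2)\neq$ Pocket$(u_1,v_1)$: were they equal, $u_1$ would lie on the boundary path of Pocket$(u_1,v_1)$, which by definition avoids $u_1$. (If needed later, one also notes, as in the proof of Lemma~\ref{lemma:maxpocket}, that the base of a maximal pocket is convex.)

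Finally I show $u_1$ does not see $u_2$. By the last paragraph, $u_1$ lies in the closed region Pocket$(u_2,v_2)$ (being on its boundary path) but not on its mouth: $u_1$ is not an endpoint of the mouth, and it cannot lie in the relative interior of the mouth, since that open segment lies in the interior of $P$ while $u_1$ is a vertex of $P$. The defining visibility property of a pocket says that its base — here $u_2$ — sees no point of the closed pocket other than points of its mouth. Hence $u_2$ does not see $u_1$, i.e.\ $u_1$ does not see $u_2$; in particular $u_1\neq u_2$. Thus Pocket$(u_1,v_1)$ and Pocket$(u_2,v_2)$ are two distinct maximal pockets with $u_1$ not seeing $u_2$, as required.

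The step I expect to need the most care is the handling of collinear vertices — the configurations highlighted in Fig.~\ref{fig:1} — both in confirming that the boundary path of Pocket$(z_1,v_1)$ genuinely threads through $u_1$ and in delimiting exactly what the base of a pocket can see along the mouth and its supporting line. The geometric substance, however, is already packaged in the ``a pocket hides its base'' property used in Lemma~\ref{lemma:maxpocket}; once the degenerate cases are pinned down, the argument is essentially combinatorial.
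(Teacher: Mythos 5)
Your construction is exactly the paper's: take a maximal Pocket$(u_1,v_1)$, pass to the pocket at $v_1$ from the other neighbour $z_1$, enlarge it to a maximal Pocket$(u_2,v_2)$, and observe that $u_1$ is trapped inside the latter. Your ``bridge'' observation that containment of pocket regions forces containment of their boundary arcs is correct (a pocket meets $\partial P$ in exactly its boundary path), and it gives a clean reason why $u_1$ is an interior point of $\sigma_2$, hence neither $v_2$ nor the far endpoint of the mouth. Up to that point you have, if anything, supplied more detail than the paper.

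The gap is in the last step. You invoke the property that the base $u_2$ ``sees no point of the closed pocket other than points of its mouth,'' but that is \emph{not} the property the paper establishes; the paper's note is that $u$ sees no point of Pocket$(u,v)$ except points \emph{on the line that contains the mouth}. The stronger version you use is false precisely in the collinear configurations the paper flags as arising in this proof: if an edge of $\sigma_2$ lies along the supporting line of the mouth just beyond $t_2$, its points are in the pocket, are not on the mouth segment, and yet are visible from $u_2$ (the segment from $u_2$ runs through $v_2$, along the mouth, through $t_2$, and then along that boundary edge, all inside the closed polygon). So establishing that $u_1$ is off the mouth \emph{segment} does not finish the argument; you must also rule out that $u_1$ lies on the \emph{line} through $u_2$ and $v_2$. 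This is exactly the extra assertion in the paper's proof (``$u_1$ is inside Pocket$(u_2,v_2)$ and not on the line of its mouth''), and it is the one nontrivial point your write-up leaves open --- you flag it yourself at the end as ``needing care,'' but the proof as written silently assumes it away by misquoting the visibility property.
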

\begin{proof}
Let Pocket$(u_1,v_1)$ be a maximal pocket.  Let $u$ be the other neighbour of $v_1$ on the polygon boundary.  Consider Pocket$(u,v_1)$, which must be contained in some maximal pocket, Pocket$(u_2,v_2)$.  Vertex $u_1$ is inside Pocket$(u,v_1)$ and not on the line of its mouth.
Therefore $u_1$ is inside Pocket$(u_2,v_2)$ and not on the line of its mouth.  Since $u_2$ cannot see points inside Pocket$(u_2,v_2)$ except on the line of its mouth  therefore $u_2$ cannot see $u_1$.  
\end{proof}

From the above lemmas, together with the observation that the
visibility graph of a convex polygon is a complete graph, which is
2-dismantlable, we obtain the result that visibility graphs are 2-dismantlable.
\begin{theorem} The visibility graph of a polygon is 2-dismantlable.
\label{theorem:vis-graph}
\end{theorem}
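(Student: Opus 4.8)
\noindent\emph{Proof plan.} The plan is to argue by induction on the number $n$ of vertices of the polygon $P$, mirroring the recursive shape of the definition of $2$-dismantlability. For the base case $n<7$ nothing new is needed: Lemma~\ref{lemma:dismantlable} already shows $G$ is dismantlable, hence cop-win, and a cop-win graph on fewer than $7$ vertices is $2$-dismantlable by definition. So assume $n\ge 7$. If $P$ is convex then $G=K_n$, in which every vertex dominates every other vertex; I may take any two vertices $a,b$ (each dominated by a third vertex, using $n\ge 3$), and $G-\{a,b\}=K_{n-2}$ is the visibility graph of a convex $(n-2)$-gon, to which the induction hypothesis applies. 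The substantive case is when $P$ is not convex, and there I would invoke the two lemmas about pockets.

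Suppose $P$ is not convex. By Lemma~\ref{lemma:pockets} there are maximal pockets $\mathrm{Pocket}(u_1,v_1)$ and $\mathrm{Pocket}(u_2,v_2)$ with $u_1$ not seeing $u_2$; since a vertex sees itself, this already forces $u_1\ne u_2$. By Lemma~\ref{lemma:maxpocket} each edge $u_iv_i$ is visibility-increasing, so $v_i$ dominates $u_i$ in $G$. To use $\{u_1,u_2\}$ as the deleted pair $\{a,b\}$ in the definition I must arrange that the dominator of each $u_i$ lies outside $\{u_1,u_2\}$, and this is exactly where maximality is used: in a maximal pocket $\mathrm{Pocket}(u_i,v_i)$ the vertex $u_i$ is convex (were it reflex, there would be a pocket properly containing $\mathrm{Pocket}(u_i,v_i)$, as in the proof of Lemma~\ref{lemma:maxpocket}), whereas $v_i$ is reflex by the definition of a pocket; hence $v_i\ne u_1$ and $v_i\ne u_2$ (and $v_i\ne u_i$ since $u_iv_i$ is a polygon edge). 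Thus $u_1$ and $u_2$ are two distinct vertices, each dominated in $G$ by a vertex not in $\{u_1,u_2\}$.

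It then remains to show $G-\{u_1,u_2\}$ is $2$-dismantlable, which I would do by exhibiting it as the visibility graph of a polygon with $n-2$ vertices and invoking the induction hypothesis. As in the proof of Lemma~\ref{lemma:dismantlable}, since $u_iv_i$ is visibility-increasing and $u_i$ is convex, the triangle $t_iu_iv_i$ spanned by $u_i$ and its two polygon neighbours ($t_i$ the neighbour other than $v_i$) is an ear of $P$, and cutting off one ear deletes its apex from the visibility graph while leaving a simple polygon. The one genuine obstacle is to justify cutting off \emph{both} ears at once, i.e.\ that the triangles $t_1u_1v_1$ and $t_2u_2v_2$ have disjoint interiors, so that the result is a simple $(n-2)$-gon with visibility graph $G-\{u_1,u_2\}$. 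The plan here is to revisit the proof of Lemma~\ref{lemma:pockets}: there $u_1$ lies inside $\mathrm{Pocket}(u_2,v_2)$ and off the line of its mouth, and the same holds for the neighbours of $u_1$, since they lie on the boundary path of the sub-pocket $\mathrm{Pocket}(u,v_1)\subseteq\mathrm{Pocket}(u_2,v_2)$ (with $u$ the neighbour of $v_1$ other than $u_1$); hence the ear $t_1u_1v_1$ sits inside $\mathrm{Pocket}(u_2,v_2)$, while the ear $t_2u_2v_2$ sits on the far side of the mouth of that pocket, so the two triangles cannot overlap. I expect the collinear-degenerate configurations, most notably $v_1=v_2$ (the two maximal pockets sharing one reflex vertex, with the ears meeting at $v_1$), to require a short separate check, for which it suffices to verify directly that removing the ear at $u_1$ does not cross the chord $\overline{t_2v_2}$, so that the ear at $u_2$ survives in the intermediate polygon. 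Once $G-\{u_1,u_2\}$ is identified as the visibility graph of an $(n-2)$-gon, the induction hypothesis closes the non-convex case, and together with the base case and the convex case this shows $G$ is $2$-dismantlable.
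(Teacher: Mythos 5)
Your proposal is correct and follows essentially the same route as the paper: the paper obtains the theorem directly from Lemmas~\ref{lemma:maxpocket} and~\ref{lemma:pockets} together with the convex (complete-graph) base case, i.e.\ by induction, cutting off the two ears at $u_1$ and $u_2$. The paper leaves the inductive bookkeeping implicit, so your explicit verification that the two ears have disjoint interiors (and hence can be removed simultaneously to yield an $(n-2)$-gon with visibility graph $G-\{u_1,u_2\}$) is a fleshing-out of the same argument rather than a different approach.
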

Consequently, the cop wins the cops and robbers game on the visibility graph of an $n$-vertex polygon in at most $\frac{n}{2}$ steps.  
\changed{There is a lower bound of $n/4$ cop moves in the worst case, as shown by the skinny zig-zag polygon illustrated in Figure~\ref{fig:zig-zag}.  In Section~\ref{sec:lower-bound} we will prove an $\Omega(n)$ lower bound on the number of cop moves even when the cop can move on points interior to the polygon, and even when the polygon has link diameter 3, i.e.~any two points in the polygon are joined by a path of at most 3 segments.

\begin{figure}[ht]
\centering
\includegraphics[width=.6\textwidth]{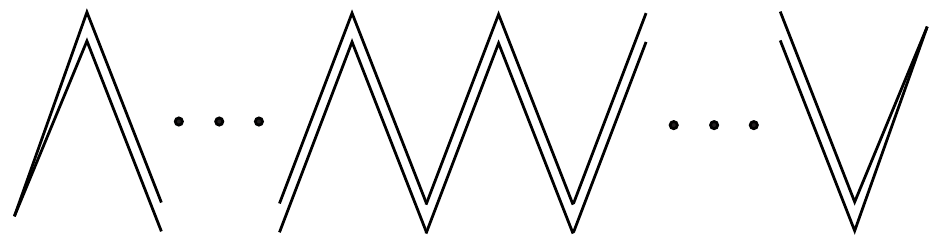}
\caption{An $n$-vertex polygon in which the robber can survive for $n/4$ steps.}
\label{fig:zig-zag}
\end{figure}

}

\section{Cops and Robbers Inside a Polygon}
\label{sec:polygon}

In this section we look at the cops and robbers game on all points
inside a polygon.  This is a cops and robbers game on an infinite
graph so induction on dismantlable orderings does not immediately apply.  Instead we give a direct geometric proof that the cop always wins.
Although the next section proves more generally that the cop
always wins in any simply connected planar region with a reasonable
boundary, it is worth first seeing the simpler proof for the polygonal
case, both to gain understanding and because this case has a tight
$\Theta(n)$ bound on the maximum number of moves (discussed in Section~\ref{sec:lower-bound}).

\begin{theorem} 
\label{thm:polygon}
The cop wins the cops and robbers game on the points inside any polygon in at most $n$ steps using the strategy of always taking the first segment of the shortest path from its current position to the robber.
\end{theorem}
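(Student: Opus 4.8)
The plan is to track the link distance from the cop to the robber and show the cop's shortest-path strategy forces it to decrease every round (one cop move plus one robber move). The key quantity is $d(c,r)$, the link distance inside the polygon $P$ from the cop's position $c$ to the robber's position $r$, i.e. the minimum number of segments on a polygonal path from $c$ to $r$ staying in $P$. The cop's move: compute a shortest (link) path from $c$ to $r$, and walk to the far endpoint of its first segment — call this point $c'$. The crucial geometric fact I need is that $c'$ can be taken to be a \emph{reflex vertex} of $P$ (or $r$ itself, if $d(c,r)=1$): shortest link paths in a simple polygon can always be routed so that every bend occurs at a reflex vertex, because a path bending at an interior point or a convex vertex can be locally straightened without increasing its link length. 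So after the cop's move, $c'$ is a reflex vertex and $d(c',r) = d(c,r) - 1$.

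\medskip
\textbf{The main step} is then to argue that the robber, moving from $r$ along any visible segment to a new point $r'$, cannot restore the distance: $d(c', r') \ge d(c',r) - 1 = d(c,r) - 2$ by the triangle inequality (the robber's move has link length $1$), but I want the stronger claim that $d(c',r') \ge d(c,r) - 1$, so that the net change over a full round is a decrease by at least $1$. This is where being at a reflex vertex matters, and it is the crux I expect to be the real obstacle. The intuition: once the cop sits at the reflex vertex $v$ that was the first turning point of the old shortest path, the robber is "trapped behind" $v$ — concretely, $r$ lies in a subregion of $P$ (think of the pocket-like region cut off by the line through $c$ and $v$, or more robustly the set of points whose shortest path to $c$ turns first at $v$) and any single segment the robber can traverse keeps it in a region from which $c'$ is still $d(c,r)-1$ link-steps away. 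I would formalize this by showing that the set $\{x \in P : d(c',x) \le d(c,r)-2\}$ is "star-shaped enough" around $c'$ that $r$ is not in it and $r$ sees no point of it; equivalently, walking one segment from $r$ cannot cross all of the "shortest-path separators" that $c'$'s move eliminated.

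\medskip
With that monotonicity in hand the proof finishes quickly: $d(c,r)$ is a nonnegative integer bounded initially by the link diameter of $P$, which is at most $n$ (in fact any two points are joined by a path through the polygon's vertices using at most $n$ segments — one can bound it by $n$ via a triangulation argument or by the fact that shortest link paths bend only at the $\le n$ reflex vertices and never revisit one). Each full round strictly decreases $d(c,r)$, so after at most $n$ rounds $d(c,r) = 1$; on the cop's next move "the first segment of the shortest path to the robber" is the segment $cr$ itself, and the cop lands on the robber, capturing it. The bound stated is $n$ steps, so I would set up the distance bound and the decrement carefully enough to land exactly there (e.g. noting $d(c,r) \le n-1$ after the cop's first move, or counting the capturing move among the $n$).

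\medskip
\textbf{The hard part} is the monotonicity lemma — ruling out that the robber regains a link-step. A clean way to isolate the difficulty: after the cop moves to reflex vertex $v = c'$, consider the "window" of $V(v)$ through which the old path continued, i.e. the edge of the visibility polygon $V(v)$ incident to $v$ along the extension of the path's second segment; show that $r$ and its shortest-path successor all lie on the far side of this window, that the robber in one move cannot reach the near side (because that near side is exactly the part of $P$ within link distance $d(c,r)-2$ of $v$, and it is not visible from $r$), and conclude. I expect the collinear / degenerate cases (robber on the line through $c$ and $v$, or $V(v)$ degenerating to have $1$-dimensional features) to need the same care as in the visibility-graph lemmas above, handled by the convention that the segment may run along the boundary.
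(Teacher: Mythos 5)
The core of your argument --- the monotonicity lemma asserting that over a full round the link distance from cop to robber strictly decreases (you write $d(c',r') \ge d(c,r)-1$, but for progress you in fact need the upper bound $d(c',r') \le d(c,r)-1$) --- is not merely the hard part you flag: it is false. Consider a skinny rectangular spiral corridor with the cop at the outer end and the robber several turns further in, at link distance $d$. The first segment of the shortest path ends at the first reflex corner $v_1$, so after the cop's move $d(v_1,r)=d-1$; but the robber simply runs around one more corner inward to a point $r'$ with $d(v_1,r')=d$. The distance at the start of the next round equals the distance at the start of this one, and this repeats for $\Theta(n)$ rounds until the robber reaches the dead end at the centre. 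So the cop does win, but the robber trades one link for one link each round, and no argument using $d(c,r)$ as a strictly decreasing potential can close the proof. A secondary issue: you take the first segment of a minimum \emph{link} path and assert its bend can be placed at a reflex vertex; min-link paths generally bend at interior points on window extensions, whereas the strategy in the statement uses the Euclidean geodesic, whose bends are exactly the reflex vertices --- this is why the cop provably plays on reflex vertices in the paper.

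The paper's proof makes ``the robber runs out of polygon'' precise instead. After the cop moves from $c_{i-1}$ to $c_i$, extend the ray \emph{backwards} from $c_i$ through $c_{i-1}$ until it is blocked by the boundary, obtaining a segment $\bar\ell_i$ that cuts off an \emph{active region} $P_i$ containing the robber; the robber can never cross $\bar\ell_i$. A four-way case analysis (does the next geodesic turn left or right at $c_{i+1}$, and does $c_{i+1}$ lie left or right of the ray $c_{i-1}c_i$) shows one configuration is impossible and in the others $P_{i+1}\subsetneq P_i$ with strictly fewer vertices. The vertex count of the active region is the correct potential: it drops by at least one per round even in rounds where the link distance to the robber does not change. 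To rescue your write-up you would need to replace the link-distance potential with such a region-shrinking argument; the spiral example shows there is no way to repair the lemma as stated.
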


\begin{proof}
We argue that each move of the cop restricts the robber to an ever
shrinking {\it active region} of the polygon.  Suppose the cop is
initially at $c_0$ and the robber initially at $r_0$.  In the $i^{\rm
  th}$ move the cop moves to $c_i$ and then the robber moves to $r_i$.

Observe that for $i\ge 1$ points $c_i$ are at reflex vertices of the
polygon.
To define the active region $P_i$ containing the robber position $r_i$, 
we first define its boundary, a directed line segment,
$\bar \ell_i$.  Suppose that the shortest path from $c_{i-1}$ to $r_{i-1}$
turns left at $c_i$, as in Figure~\ref{fig:ray}.  
Define $\bar \ell_i$ to be the segment that starts at $c_i$ and goes
through $c_{i-1}$ and stops at the first boundary point of the polygon
where an edge of the polygon goes to the left  of
the ray $c_i c_{i-1}$.  (If the shortest path turns right at $c_i$ we
similarly define  $\bar \ell_i$ to stop where a polygon edge goes right.)
In general, the segment $\bar \ell_i$ cuts the polygon into two (or more)
pieces; let {\it active region} $P_i$  be the piece that contains
$r_{i-1}$.  (In the very first step, $\bar \ell_1$ may hug the polygon
boundary, so $P_1$ may be all of $P$.)

\begin{figure}[ht]
\centering
\includegraphics[width=.65\textwidth]{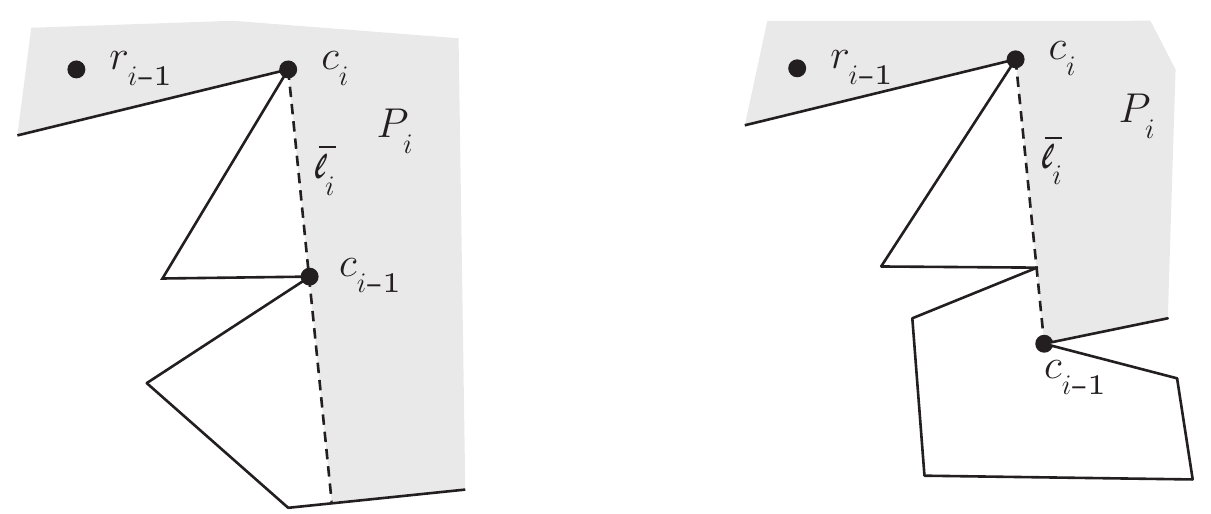}
\caption{The segment $\bar \ell_i$ and the active region $P_i$ (shaded)
  containing robber positions $r_k$, for all $k\ge i-1$.}
\label{fig:ray}
\end{figure}

We claim by induction on the (decreasing) number of vertices of $P_i$ that the robber can never leave $P_i$, i.e.,~that $r_i$, $r_{i+1}$,\dots\ are in $P_i$.  It suffices to show that $r_i$ is in $P_i$ and that $P_{i+1} \subseteq P_i$ and that $P_{i+1}$ has fewer vertices.

Suppose that the shortest path from $c_{i-1}$ to $r_{i-1}$ turns left at $c_{i}$.
(The other case is completely symmetric.)
Observe that the next robber position $r_i$ must be inside $P_i$, i.e.,~the robber cannot move from $r_{i-1}$ to cross $\bar \ell_i$.
We  distinguish two cases depending whether 
the shortest path from $c_i$ to $r_i$ makes a left or a right turn at $c_{i+1}$. 

\smallskip\noindent{\bf Case 1.} See Figure~\ref{fig:polygon-case1}.  The shortest path from $c_i$ to $r_i$ makes a left turn at $c_{i+1}$. 
We consider two subcases: (a) $c_{i+1}$ is left of the ray $c_{i-1} c_i$; and (b) $c_{i+1}$ is right of (or on) the ray $c_{i-1} c_i$.
We claim that case (b) cannot happen because the robber could not have moved from $r_{i-1}$ to $r_i$---see Figure~\ref{fig:polygon-case1}(b). 
For case (a) observe that  $\bar \ell_{i+1}$ extends past $c_i$ and therefore $P_{i+1}$ is a subset of $P_i$ and smaller by at least one vertex---see Figure~\ref{fig:polygon-case1}(a).

\begin{figure}[ht]
\centering
\includegraphics[width=.65\textwidth]{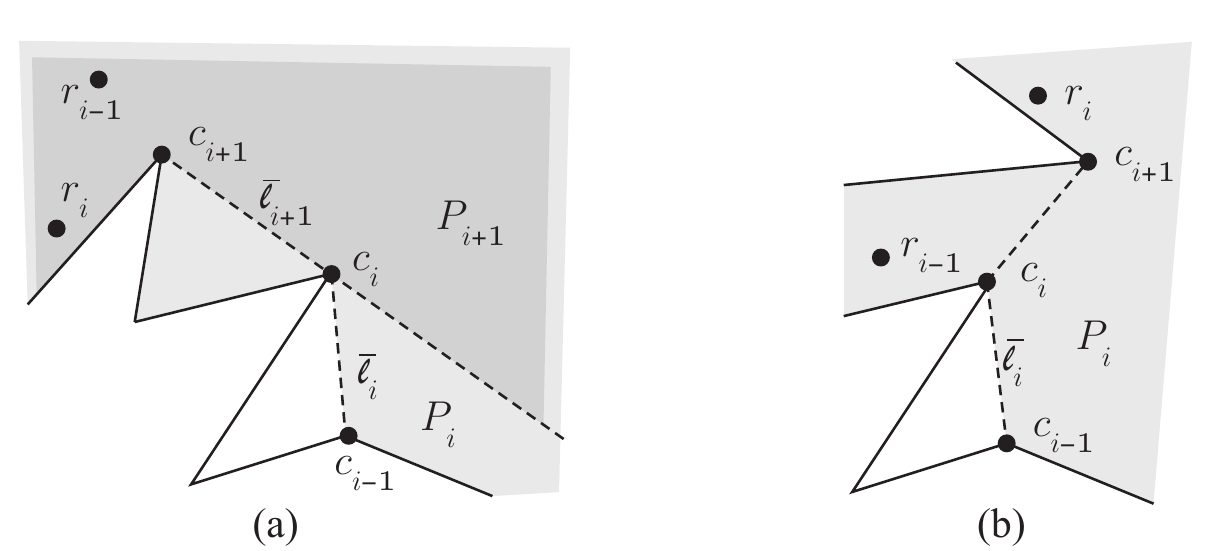}
\caption{Case 1.  (a) If $c_{i+1}$ is left of the ray $c_{i-1} c_i$ then $P_{i+1}$ (darkly shaded) is a subset of $P_i$ (lightly shaded). (b) It cannot happen that $c_{i+1}$ is to the right of the ray $c_{i-1} c_i$ because the robber could not have moved from $r_{i-1}$ to $r_i$.}
\label{fig:polygon-case1}
\end{figure}

\smallskip\noindent{\bf Case 2.} See Figure~\ref{fig:polygon-case2}. The shortest path from $c_i$ to $r_i$ makes a right turn at $c_{i+1}$. 
We consider two subcases: (a) $c_{i+1}$ is left of the ray $c_{i-1} c_i$; and (b) $c_{i+1}$ is right of (or on) the ray $c_{i-1} c_i$.
See Figure~\ref{fig:polygon-case2}. 
In case (a) $\bar \ell_{i+1}$ stops at $c_i$ and in case (b) it may happen that $\bar \ell_{i+1}$ extends past $c_i$, but in either case, segment $\bar \ell_i$ is outside $P_{i+1}$, and $P_{i+1}$ is a subset of $P_i$ and smaller by at least one vertex.
\end{proof}

\begin{figure}[ht]
\centering
\includegraphics[width=.73\textwidth]{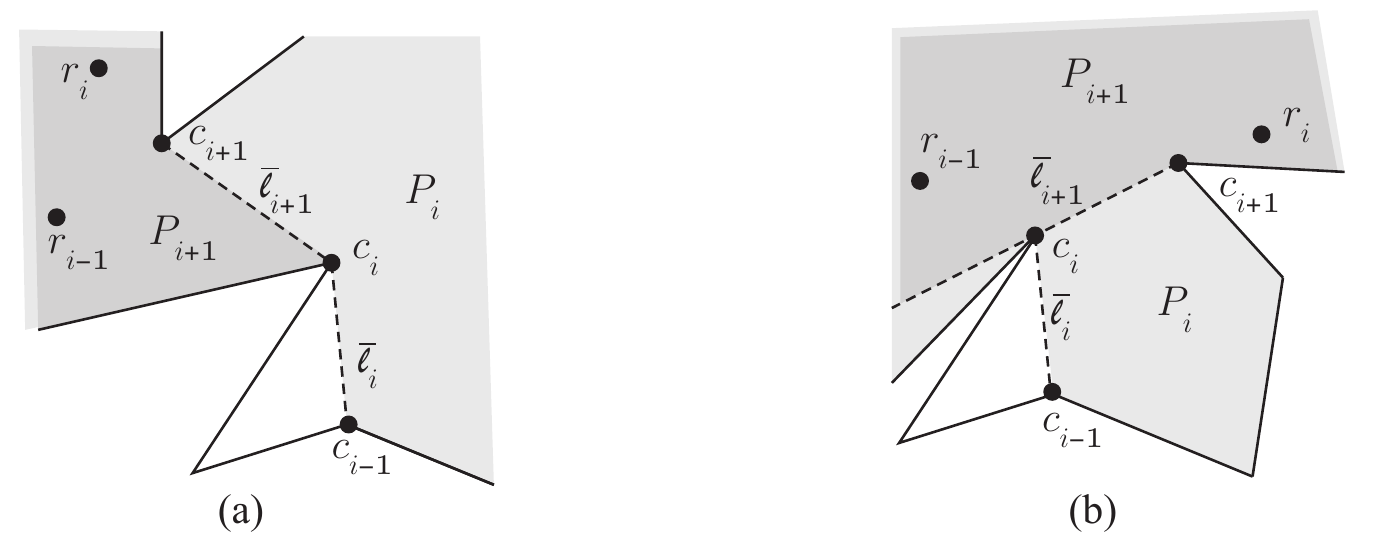}
\caption{Case 2.  (a) $c_{i+1}$ is left of the ray $c_{i-1} c_i$.  (b) $c_{i+1}$ is right of the ray $c_{i-1} c_i$. 
In either case $P_{i+1}$ (darkly shaded) is a subset of $P_i$ (lightly shaded).}
\label{fig:polygon-case2}
\end{figure}

We note 
that Bhadauria et al.~\cite{Bhadauria-12} use the same cop strategy of
following a shortest path to the robber for the version of the problem
where each cop or robber move is at most distance~1.

Theorem~\ref{thm:polygon} can alternatively be proved by decomposing the polygon into $O(n^2)$ triangular regions and proving that they have an ordering with properties like a dismantlable ordering, but we do not include the proof here.

\changed{
\subsection{Lower Bounds}
\label{sec:lower-bound}

In this subsection we discuss lower bounds on the worst case number of cop moves.  The example in Figure~\ref{fig:zig-zag} shows that the cop may need $\Omega(n)$ moves even when it may move on interior points of the polygon.  We give an example to show that this lower bound holds even when the polygon has small link diameter.

\begin{theorem} 
\label{thm:lowerbound}
There is an $n$-vertex polygon with link diameter 3 and a robber strategy that forces the cop to use $\Omega(n)$ steps.
\end{theorem}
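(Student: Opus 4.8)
The plan is to design an explicit polygon $P$ with $n = \Theta(k)$ vertices together with a ``retreating'' robber strategy that survives $\Omega(k)$ rounds against \emph{every} cop. The polygon will be a comb: a base region $B$ of bounded complexity, together with $k$ thin, pairwise disjoint teeth (dead‑end pockets) $T_1,\dots,T_k$ attached to $B$ in order, each of bounded complexity, so that $n=\Theta(k)$. For each tooth let its \emph{attack region} $A_j$ be the set of points of $P$ from which the tip $t_j$ of $T_j$ is visible; by making the teeth long and thin, $A_j$ is a thin wedge inside $T_j$ together with a narrow cone inside $B$, and the wedges of non‑adjacent teeth are disjoint. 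This will force $\text{link-dist}(t_i,t_j)=3$ for $|i-j|\ge 2$ (no single point sees two far tips, and tips of far teeth do not see each other). The base $B$ is shaped so that (a) any two attack regions contain a pair of mutually visible points, so that $t_i \to A_i \to A_j \to t_j$ is a $3$-link path and $\text{link-diam}(P)=3$; and (b) --- the crux --- $B$ is \emph{not} ``short'' for repositioning: although the cop can step into the attack region of any single tooth quickly, any one cop step lies in the attack region of only $O(1)$ teeth, so each cop move commits the cop to a bounded neighbourhood of teeth. Concretely I would take $B$ to be a non‑convex, link‑diameter‑$2$ region (for instance a long ``fat zig‑zag'' whose outer corners are collinear, so that sightlines for the diameter‑$3$ paths exist) but in which a cop still cannot see into more than $O(1)$ teeth from any single point.

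The robber strategy is a potential argument. Order the teeth along $B$, let the robber start at $t_k$ and the cop start anywhere, and define the potential $\Phi$ to be (roughly) the number of teeth ``behind'' the cop that could still shelter the robber, i.e.\ the largest $j$ with the cop's position outside $A_1\cup\dots\cup A_j$. Initially $\Phi=\Theta(k)$, and the robber can be captured only when $\Phi=0$. I would establish two claims: (i) a single cop move decreases $\Phi$ by at most a constant, by design property (b); and (ii) whenever the cop's move threatens the robber (brings the cop to a point that sees the robber's current location), the robber has a legal reply --- one step deeper into, or toward, the next unthreatened tooth, using the $3$-link shortcut through $B$ only when necessary --- that avoids capture without decreasing $\Phi$. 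Together these give that the robber survives for $\Omega(\Phi_{\text{init}})=\Omega(k)=\Omega(n)$ rounds, proving the theorem; this also shows the $n/4$ lower bound of Figure~\ref{fig:zig-zag} is essentially tight for Theorem~\ref{thm:polygon} even under the link‑diameter‑$3$ restriction.

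The main obstacle is establishing design property (b) simultaneously with link diameter exactly $3$: these pull in opposite directions. A polygon in which every point sees a large part of every tooth would have small link diameter but would let the cop corner the robber in $O(1)$ moves (as happens already for a comb with a convex base); conversely, a polygon forcing $\Omega(k)$ genuinely local cop moves is prone to having large link diameter, like the plain zig‑zag. Resolving this tension --- choosing the shape of $B$ and the length, placement and orientation of the teeth so that the long sightlines needed for the diameter‑$3$ paths are useless to the cop for repositioning --- is the technical heart of the construction and is what fixes the precise shape of the polygon. Once the geometry of $B$ and the teeth is pinned down, verifying claim (ii) reduces to a finite case analysis on whether the cop's last move entered $B$, traversed $B$, or dipped into a tooth, in the spirit of the case analysis in the proof of Theorem~\ref{thm:polygon} but somewhat more involved.
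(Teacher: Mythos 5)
Your proposal correctly identifies the overall shape of the argument --- a comb-like polygon with $\Theta(n)$ hiding spots whose visibility regions are small and nearly pairwise disjoint, a robber who retreats to an unseen neighbouring spot, and enough long sightlines to keep the link diameter at $3$ --- but it stops exactly where the content of the theorem begins. The theorem is an existence statement about a polygon, and you explicitly defer the construction: you name the tension between your property (b) and link diameter $3$, call resolving it ``the technical heart,'' and do not resolve it. Without a concrete polygon there is nothing against which to verify claims (i) and (ii), so as written this is a plan rather than a proof. The paper resolves the tension explicitly: it modifies the skinny zig-zag of Figure~\ref{fig:zig-zag} by opening a wide central corridor, so the polygon becomes a concatenation of $k=n/3$ congruent sections each carrying one hiding point $x_i$; each $x_i$ is visible only from a small region local to its own section plus the two segments $x_{i-1}x_i$ and $x_i x_{i+1}$, the corridor yields link diameter $3$, and the single geometric fact doing all the work is that no point of the polygon sees all three of $x_{i-1}$, $x_i$, $x_{i+1}$. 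That fact is precisely what your claim (ii) needs and what your unconstructed base $B$ would have to deliver.

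Two further issues. First, your potential $\Phi$ (the largest $j$ with the cop outside $A_1\cup\dots\cup A_j$) does not satisfy your claim (i): since $\Phi$ is defined via a prefix union, a single cop move ending inside $A_1$ drives $\Phi$ to $0$ no matter where the cop started (and a cop starting in $A_1$ makes $\Phi=0$ initially). The quantity that actually decreases by at most $1$ per round is the robber's index distance to the nearer end of the comb, because the robber only ever moves to an adjacent hiding spot; starting at the nearest $x_i$ to $x_{\lfloor k/2\rfloor}$ that is invisible to the cop's initial position, this gives $k/2-1\in\Omega(n)$ surviving moves directly, with no claim needed about how many teeth a single cop move can threaten beyond the ``no point sees three consecutive $x_i$'' fact. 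Second, any such construction tends to be degenerate (collinear edges); the paper notes the perturbation needed to remove the degeneracies, at the cost of a constant factor in the bound, and your writeup would have to address the same point.
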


\begin{figure}[ht]
\centering
\includegraphics[width=.8\textwidth]{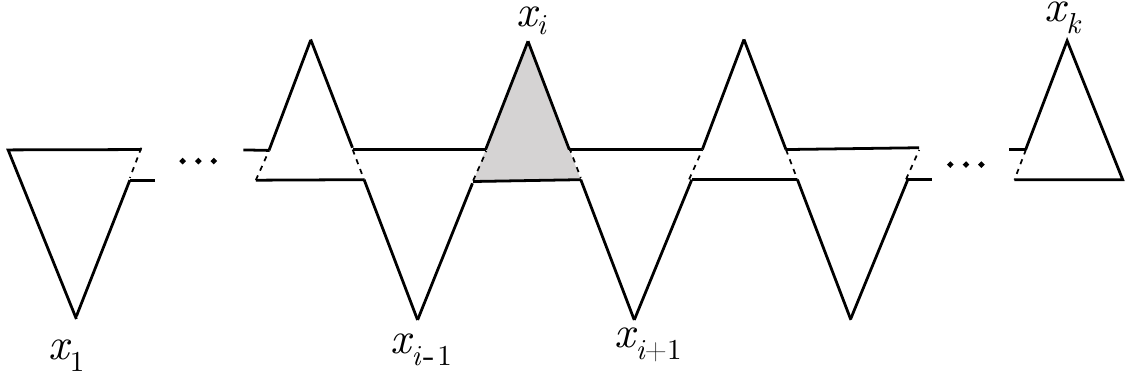}
\caption{A polygon with constant link diameter in which the robber can survive for $\Omega(n)$ steps.}
\label{fig:degen-example}
\end{figure}

\begin{figure}[ht]
\centering
\includegraphics[width=.55\textwidth]{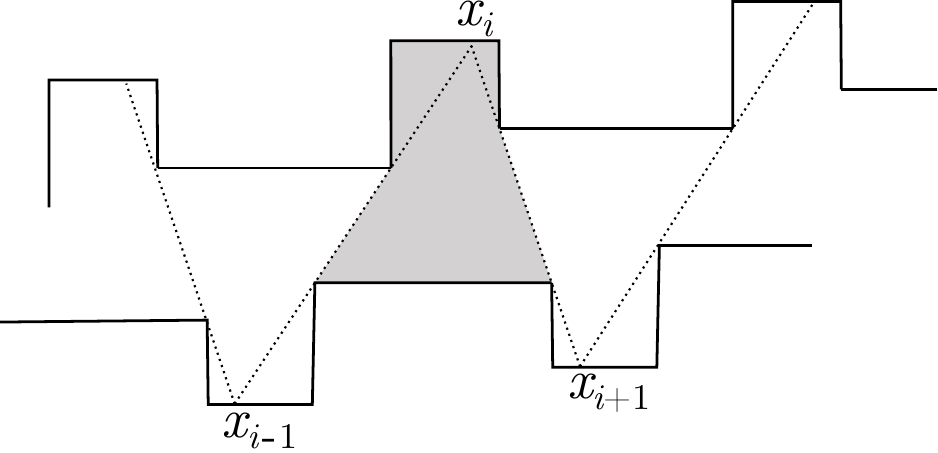}
\caption{We can adjust the edges and the dashed-line path such that the polygon becomes non-degenerate.}
\label{fig:polygon-lowerbound}
\end{figure}

\begin{proof}
We modify the zig-zag polygon from Figure~\ref{fig:zig-zag} to decrease the link diameter to 3 as shown in Figure~\ref{fig:degen-example}. The polygon consists of $k=n/3$ similar sections concatenated, where $n$ is the number of vertices of the polygon. 
We will show that the robber has a strategy to survive at least $k/2$ steps in such a polygon.

\noindent {\bf Robber Strategy:} The robber plays on points $x_i$, $1\leq i \leq k$. Initially the robber chooses the closest point $x_i$ to $x_{\lfloor{k/2}\rfloor}$ such that $x_i$ is not visible to the cop's initial position.
Observe that $x_i$ is only visible to the gray area in Figure~\ref{fig:degen-example} and the line segments $x_{i-1} x_i$ and $x_{i}x_{i+1}$, so the cop can only see at most two of the $x_i$'s from its initial position. 

The robber remains stationary until it is visible to the cop, i.e.,~when the cop enters the gray area or along the segments $x_{i-1} x_i$ and $x_{i}x_{i+1}$. Then, the robber moves to one of the neighbors $x_{i-1}$ or $x_{i+1}$, the one that is not visible to the cop. At least one of $x_{i-1}$ and $x_{i+1}$ is safe for the robber to move to in the next step as we observe that there is no point visible to all three of $x_{i-1}$, $x_i$, and $x_{i+1}$. The robber can survive at least $k/2 -1$ steps with this strategy as the game may only be terminated at either $x_1$ or $x_k$.

Note that the polygon in Figure~\ref{fig:degen-example} is degenerate, with edges that lie on the same line, but we can adjust it a little bit to resolve all degeneracies, as shown in Figure~\ref{fig:polygon-lowerbound}, at the expense of decreasing the lower bound to $n/8$. 
We need to be careful not to increase the link diameter of the polygon through the edge level changes. The changes in horizontal edge levels must be small in comparison with the width of the middle corridor. 
\end{proof}
}

 
\section{Cops and Robbers Inside a Splinegon}
\label{sec:region}

In this section we consider the cops and robbers game in a simply
connected region with curved boundary, specifically a \emph{splinegon}
$R$ whose boundary consists of $n$ smooth curve segments that each lie on their own convex hull.
Other natural assumptions (such as algebraic
curves or splines of limited degree, or other curves of constant
complexity) give regions that can be converted to splinegons with a
constant factor overhead by cutting at points of inflection and
points with vertical tangents. Assume that tangents in a given
direction and common tangents between curve segments can be computed.
A \emph{vertex} is an endpoint between two curve segments.

\remove{
Clearly we need some conditions on the curve to ensure the cop can win.
 We suggest playing in a ``splinegon''. 
Assume that the splinegon $R$ consists of $n$ curve segments that each
lie on their own convex hull.
}

We need another assumption to avoid an infinite game where the cop
gets closer and closer to the robber but never reaches it.  This
occurs, for example, when two curves meet tangentially at a vertex as
in Figure~\ref{fig:crescent-moon}---in fact, in this situation a
robber at a vertex avoids capture by remaining stationary.  
One possibility is to assume that the robber is captured when the cop gets sufficiently close (within some $\epsilon$).
Instead, we will make the assumption that the 
link distance between any two points in the
splinegon $R$ is finite, and bounded by $d$.  


\begin{figure}[ht]
\centering
\includegraphics[width=.21\textwidth]{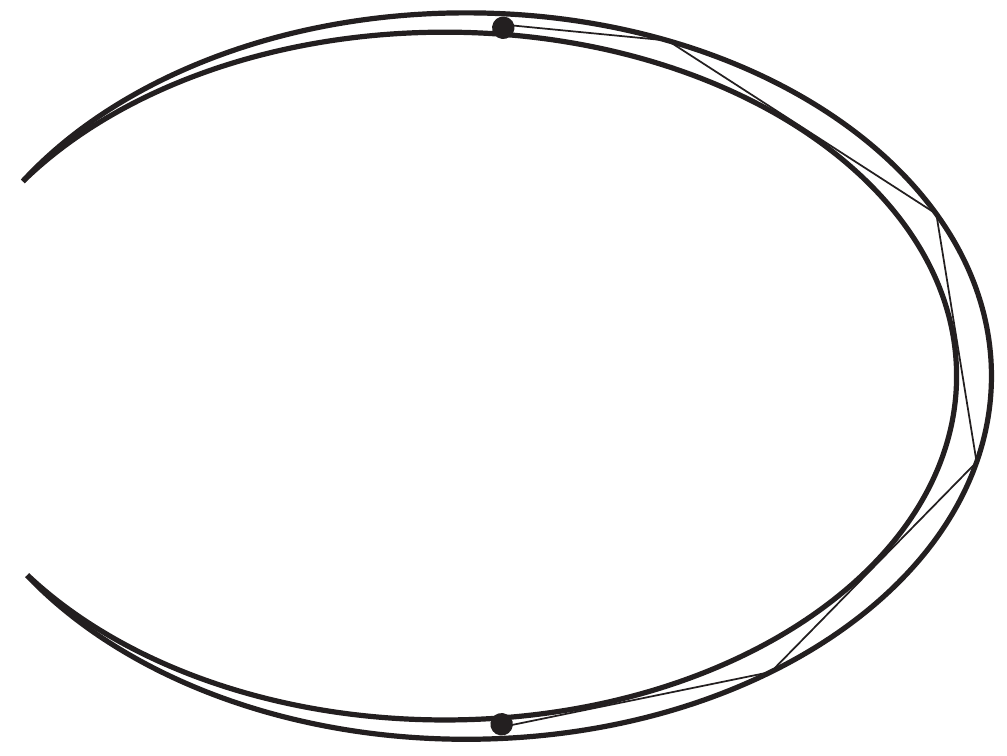}
\caption{The number of cop moves may be infinite even when $n=2$.  
Truncating the vertices makes the game finite but the 
number of moves may depend on the link diameter.}
\label{fig:crescent-moon}
\end{figure}

With these assumption---a splinegon $R$ of $n$ curved segments with link diameter $d$---we 
prove that the cop always wins, and does so in $O(n^2 + d)$ steps.  But first, we show through additional examples that the strategy must be a little more complex than in the polygonal case. 

\subsection{The Cop Strategy}
A main difference from the polygonal case is that the cop
may need to move to interior points in order to win.
Figure~\ref{fig:curved-triangle}, for example, shows a region in
which the robber can win if the cop always stops on the boundary. 

\begin{figure}[h]
\centering
\includegraphics[width=.21\textwidth]{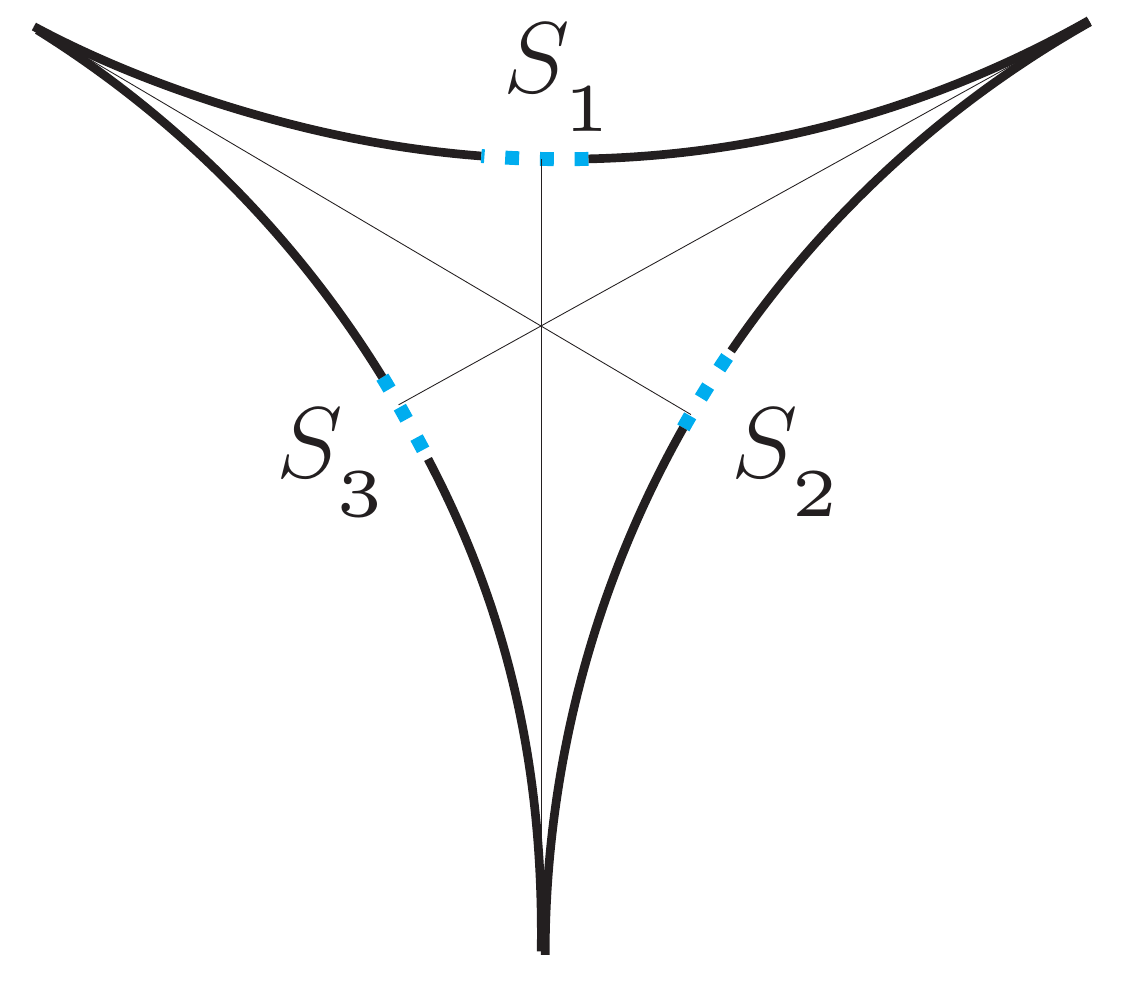}
\caption{If the cop plays only on the boundary then the robber can win: the robber's strategy is to play on the middle dashed portions of the boundary and always move to the same curve $S_i$ that the cop is on.  In our cop strategy the cop would move to the endpoint tangents (drawn as thin lines).}
\label{fig:curved-triangle}
\end{figure}

\begin{figure}[ht]
\centering
\includegraphics[width=.3\textwidth]{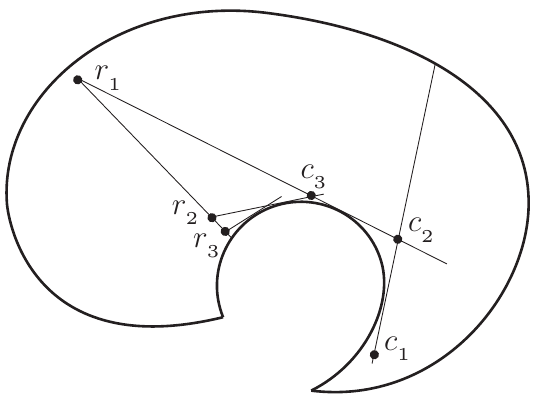}
\caption{If the cop moves only far enough to see the robber then the robber can win because it can force the cop to take smaller and smaller steps.}
\label{fig:cop-sees-robber}
\end{figure}

Our strategy is that the cop  starts off along the first straight segment of the shortest path to the robber's current position. 
However, if this segment is tangent to a concave curve of the shortest path then the cop should move 
further, into the interior of the polygon.  How far should the cop go?
It is tempting stop the cop when it can see the robber, but
Figure~\ref{fig:cop-sees-robber} shows that this strategy fails---the
cop should move farther.
Figure~\ref{fig:godfrieds-region} shows
there is also a danger of moving the cop too far.

\begin{figure}[ht]
\centering
\includegraphics[width=.25\textwidth]{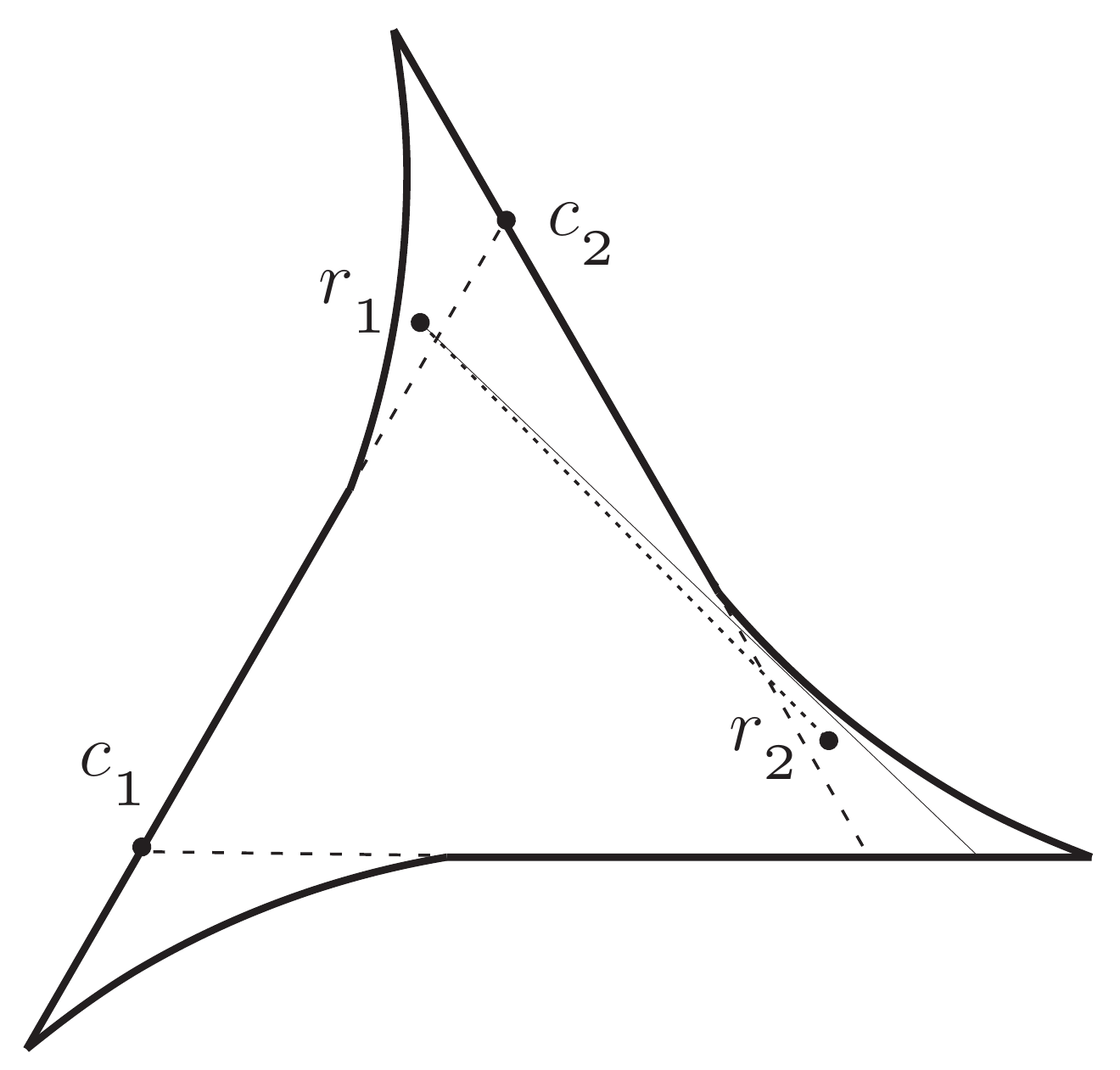}
\caption{If the cop moves too far then the robber can win: the cop
  moves from $c_1$ to $c_2$, the robber moves from $r_1$ to $r_2$
  (dotted line) and then this can be repeated around the polygon. In
  our proposed strategy the cop would not move from $c_1$ all the way
  to $c_2$---it would stop at a robber exit line (drawn as a thin line).}
\label{fig:godfrieds-region}
\end{figure}


In our strategy the cop will stop at 
certain lines inside the splinegon.  We first state the cop strategy in terms of these lines, and then define the lines. 
We use the notation $c_{i-1}$ for the cop's position and $r_{i-1}$ for the robber's position at the start of round $i$.  Their initial positions are $c_0$ and $r_0$.  
Recall that each round begins with a cop move.

\smallskip\noindent{\bf Cop Strategy for Round $i$.}  If the cop sees
the robber, it moves to the robber's position and wins.  Otherwise,
define the cop's next position, $c_i$ as follows:
Compute the shortest path from the cop's current position, $c_{i-1}$,
to the robber's current position, $r_{i-1}$. 
Let $\ell_i$ be the ray along the first straight segment of this  shortest path, or, if the shortest path begins with a curve, let $\ell_i$ be the tangent to this curve.
Let $b_i$ be the first point where the shortest path diverges from
$\ell_i$.  Then $b_i$ lies on the boundary of the splinegon $R$.
If  $b_i$ is not a splinegon vertex, let $\gamma_i$ be the boundary curve containing $b_i$.
If $b_i$ is a splinegon vertex then there are two boundary curves incident to $b_i$, and we let $\gamma_i$ be the one on the far side of $b_i$ with respect to the direction of $\ell_i$.
By reflection if necessary, assume that the path starts upward and turns left,
as depicted in  Figure~\ref{fig:curvy-cop-move}.  

If $b_i\ne c_{i-1}$  and $b_i$ is a vertex,  then define $c_i$ to be $b_i$.  
(This matches the polygonal case.)
Otherwise $\ell_i$ is tangent to $\gamma_i$, so define $c_i$ to be the first point on the ray $\ell_i$, past $b_i$, where $\ell_i$ intersects a 
\emph{common tangent} or a \emph{robber exit line} or
touches the splinegon boundary.  

\begin{figure}[ht]
\centering
\includegraphics[width=.25\textwidth]{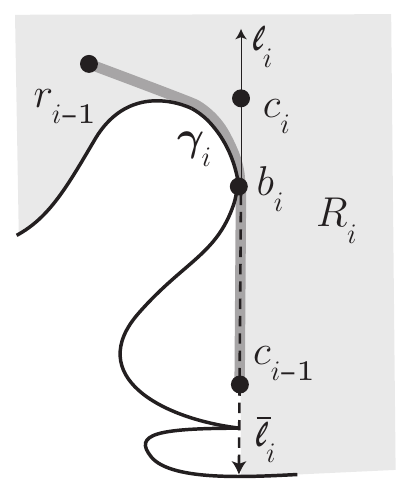}
\caption{The cop move, showing the shortest path from $c_{i-1}$ to
  $r_{i-1}$ (thick grey path), the first straight segment of this path
  $c_{i-1}b_i$ upward along ray $\ell_i$, the new cop position $c_i$,
  the downward segment $\bar\ell_i$ (dashed) and the active region $R_i$ (lightly shaded).
}
\label{fig:curvy-cop-move}
\end{figure}


We now define \emph{common tangents} and \emph{robber exit lines}. Refer to Figure~\ref{fig:defns}.
A \emph{common tangent} is a line segment that is tangent to $R$ at two points and extends in both directions until it exits the region.
At each endpoint of each curve we have an \emph{endpoint tangent}---the tangent to the curve through the endpoint.  An endpoint tangent extends in both directions until it exits the region.
We count endpoint tangents as common tangents.
There are $O(n^2)$ common tangents, because a curve has at most two common tangents with any other curve or vertex. 

We define \emph{robber exit lines} 
relative to the current robber and cop positions, using the notation from the cop strategy above. 
See Figure~\ref{fig:defns}. 
Consider segments that start at $r_{i-1}$ and are tangent to $R$,
ending at the tangent point.    Among these,  a \emph{robber exit
  line} is one that crosses ray $\ell_i$ such that the tangent point is on the far side of the segment with respect to the direction of $\ell_i$.  
If we extend a robber exit line past its tangent point to the region boundary we obtain a \emph{bay} of points not visible from the robber position.  Note that every bay contains a vertex of the region---either the tangent point itself is a vertex or the tangent point is on a reflex curve, and we must change curves before the end of the bay.
 
\begin{figure}[ht]
\centering
\includegraphics[width=.5\textwidth]{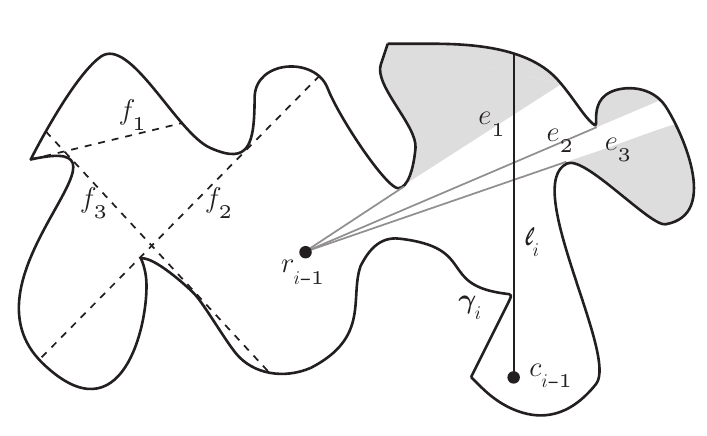}
\caption{Lines $f_1, f_2$ and $f_3$ are three of the many common tangents.  Lines $e_1, e_2$ and $e_3$ go through $r_{i-1}$ and are tangent to $R$; of those, $e_2$ is the only exit line. Lightly shaded regions are the bays.}
\label{fig:defns}
\end{figure}

\remove{
We define \emph{tight lines} to be common tangents plus the robber exit lines of the current robber's position. 
With this definition, our description of the cop strategy is complete.
}

With these definitions of common tangents and robber exit lines, we have completely specified the cop strategy.
We note that the cop's move can be computed in polynomial time assuming we have constant time subroutines to compute common tangents and tangents at a given point.  We can preprocess to find all common tangents.  For a given robber position, we can find all robber exit lines in polynomial time.  We can find shortest paths in the splinegon $R$ in linear time using the algorithm of~\cite{melissaratos1992}.  With this information, we can find the next cop position.  
A straightforward implementation takes $O(n^2)$ time, though this can probably be improved.
 


\remove{
\begin{theorem}\label{thm:splinegon}
The cop wins the cop and robber game on the points inside a splinegon $R$ by the following strategy at each step $i$, $i\geq 1$: Compute the shortest path from cop's current position $c_{i-1}$ to robber's current position $r_{i-1}$. Move on the first edge of the shortest path and continue until it hits the first \emph{key line} or exits $R$.  
\end{theorem}
} 

\subsection{The Cop Wins}
In order to prove 
that the cop wins using the strategy specified in the previous section,
we  first show that each cop move restricts the robber to a smaller subregion.
Then we  show that the number of steps the cop needs to win is 
$O(n^2 + d)$ where $n$ is the number of segments and $d$ is the link diameter of the region. 

We begin by defining the subregion that the robber is restricted to during and after round $i$.
Define $\bar\ell_i$ to be the directed segment that starts at  
$b_i$, goes opposite ray $\ell_i$  through $c_{i-1}$, and stops at the
first boundary point for which every $\epsilon>0$ neighborhood
contains a boundary point on the opposite side of $\bar\ell_i$ from
$\gamma_i$ at $b_i$ (i.e.,~where part of the boundary is to the
left of the downward directed segment $\bar \ell_i$ in Figure~\ref{fig:curvy-cop-move}.)
The segment $\bar\ell_i$ starts and ends on the boundary so it
cuts the region into two (or more) pieces; define the \emph{active region}, $R_i$, to be the piece that contains $r_{i-1}$.  Define the \emph{exclusion region} to be its complement in $R$.
Observe that the robber cannot exit $R_i$ in round $i$, i.e., $r_i$ is inside $R_i$.  This is because
$r_{i-1}$ is on the wrong side of the line through $\bar \ell_i$.

 We  prove below in Lemma~\ref{lem:growing} that $R_{i+1} \subsetneq R_i$, i.e., the active region shrinks.  
 Following that, we  show that the cop wins in a finite number of steps.  The proofs are similar to the analogous results for polygons, and involve handling four cases for the left/right configuration of the cop and the robber.  
Suppose that the shortest path from $c_{i-1}$ to $r_{i-1}$ makes a left turn at $b_{i}$.
(The other case is completely symmetric.)
We distinguish the following cases: 

\smallskip
\noindent{\bf Case 1.} The shortest path from $c_i$ to $r_i$ makes a left turn at $b_{i+1}$.

{\bf (a)} $c_{i+1}$ is left of the ray $c_{i-1} c_i$---more precisely, in moving  from $c_{i-1}$ to $c_i$ to $c_{i+1}$ the cop turns left by an angle in the range $(0,180^\circ)$.  (Turning by $0^\circ$ will be handled in case (b).)

{\bf (b)}  $c_{i+1}$ is right of the ray $c_{i-1} c_i$---more precisely, the cop turns right at $c_i$ by an angle in $[0,180^\circ)$.

\noindent{\bf Case 2.} The shortest path from $c_i$ to $r_i$ makes a right turn at $b_{i+1}$.

{\bf (a)} $c_{i+1}$ is left of the ray $c_{i-1} c_i$---more precisely, the cop turns left at $c_i$ by an angle in $(0,180^\circ)$.

{\bf (b)}  $c_{i+1}$ is right of the ray $c_{i-1} c_i$---more precisely, the cop turns right at $c_i$ by an angle in $[0,180^\circ)$.

\smallskip

Note that the cop never turns by an angle of $180^\circ$ (doubling back) because then $b_{i+1}$ would be on the line segment between $c_i$ and $b_i$ or further along, on the ray $\bar \ell_i$.  In the first case, $b_{i+1}$ would provide a stopping point for $c_i$ according to the rule that the cop stops on the  boundary.  The second case is impossible because the robber can never move from $r_{i-1}$ to a position that would cause the cop to move onto $\bar \ell_i$.

\remove{
\rednote{by H.V.} Note that we ignored the case where $c_{i-1}$, $c_i$, and $c_{i+1}$ are collinear in both cases. In case 1 collinear situation never happens because otherwise the robber would need to exit a bay (where the path bends in turn $i$) and enter the other bay (where the path bends in turn $i+1$) that's not possible with collinear moves. In case 2 we may count collinear case as the extreme case of one of two cases (a) or (b).
}

We  begin by showing that Case 2(a) can happen only in special circumstances and 
that  Case 1(b) cannot happen at all.

\remove{
\begin{lemma}
After cop's move at step $i$, robber is limited to the subregion that is cut by $\ell_i$, called robber's region $R_i$.
\end{lemma}
\begin{proof}
According to the definition of $\ell_i$ and $R_i$, the only way to exit $R_i$ is to cross $\ell_i$.
The robber at $r_{i-1}$ may not cross $\ell_i$ in step $i$ because the shortest path from $c_{i-1}$ to $r_{i-1}$ bends left at $b_i$ and $\ell_i$ is fully behind $b_i$.
\end{proof}

\begin{lemma}\label{lem:growing}
The robber's region $R_i$ shrinks after each move of the cop.
\end{lemma}

\bluenote{Similar to the polygonal case, we should handle four cases that may happen after robber's move to $r_i$. In the following two lemmas we show that two cases are impossible following  cop's strategy in curved}
} 

\begin{lemma}\label{lem:infl}
In Case 2(a) the segment $c_ib_{i+1}$ is tangent to the boundary on its left side (as well as tangent to the boundary on its right side at $b_{i+1}$).  
\end{lemma}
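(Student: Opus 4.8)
The plan is to argue by contradiction: suppose we are in Case 2(a) but the segment $c_i b_{i+1}$ is \emph{not} tangent to the boundary on its left side. First I would recall the geometric setup. In round $i$ the shortest path from $c_{i-1}$ to $r_{i-1}$ goes up along $\ell_i$, turns left at $b_i$ (tangentially to $\gamma_i$, since we are in the nondegenerate branch where $c_i$ is chosen on $\ell_i$ past $b_i$), and the active region $R_i$ lies to the right of the downward segment $\bar\ell_i$. The cop is at $c_i$ on $\ell_i$, past $b_i$. In Case 2 the shortest path from $c_i$ to $r_i$ turns \emph{right} at $b_{i+1}$; in subcase (a) the point $c_{i+1}$ is to the left of ray $c_{i-1}c_i$, meaning the cop turns left at $c_i$ by an angle in $(0,180^\circ)$. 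So the segment $c_i b_{i+1}$ departs $\ell_i$ to the left, and at its far endpoint $b_{i+1}$ it is tangent to the boundary with the tangent curve lying to the \emph{right} (because the path turns right there). I want to show it must \emph{also} be tangent to the boundary on its left somewhere along the way.

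The key step is to track where $r_i$ can be and where $b_{i+1}$ can be. Since the cop turns left at $c_i$ into the region bounded on its left by material that was "behind" $b_i$, the segment from $c_i$ heading left-and-up toward $b_{i+1}$ must stay inside $R$; if it never touched the boundary on its left, then the entire left wedge between ray $\ell_i$ (beyond $c_i$) and the segment $c_i b_{i+1}$ would be free of boundary on the left side. I would then observe that $r_i$ lies in $R_i$ (robber cannot cross $\bar\ell_i$), hence to the right of $\bar\ell_i$, and that the first segment of the shortest path from $c_i$ to $r_i$ is exactly $c_i b_{i+1}$. The crucial point: because in round $i$ the cop stopped at $c_i$ on ray $\ell_i$ rather than earlier, by the Cop Strategy $c_i$ is the first point past $b_i$ where $\ell_i$ meets a common tangent, a robber exit line (for $r_{i-1}$), or the boundary. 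I would use this stopping rule to locate the obstacle forcing the left turn: the reason the shortest path from $c_i$ to $r_i$ does not continue straight along $\ell_i$ but bends left immediately at $c_i$ (i.e.\ $b_{i+1}=c_i$) — or bends left at a later point — is that some piece of boundary blocks the straight route, and I claim that blocking piece lies on the left of $c_i b_{i+1}$. Tracing the shortest path from $c_i$, the first place it leaves the straight line toward $r_i$ is a boundary tangency, and since the net turning from $b_i$-tangency (curve on the right/behind) to $b_{i+1}$-tangency (curve on the right ahead) passes through a left turn at $c_i$, continuity of the "funnel" boundary of the shortest path forces an intervening left tangency — otherwise the funnel's left chain would be a single straight edge from $c_{i-1}$ through $c_i$ to $b_{i+1}$, contradicting the left turn at $c_i$.

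Concretely, I would invoke the standard structure of shortest paths in a splinegon (the funnel/shortest-path-tree picture from Melissaratos--Souvaine~\cite{melissaratos1992}): the shortest path from $c_i$ to $r_i$ is taut, so at every vertex of the polygonal-plus-curved path it is locally tangent to the boundary on the side it bends away from. The path makes a left turn at $c_i$ and a right turn at $b_{i+1}$; a taut path cannot change the side it hugs without being tangent to the boundary on the intervening convex side, so between $c_i$ and $b_{i+1}$ the segment $c_i b_{i+1}$ must be tangent to boundary on its left — at $b_{i+1}$ on the right, and, because $c_i$ itself is a turn point forced by the stopping rule, on the left at $c_i$ (equivalently at the endpoint tangent / common tangent that defined $c_i$). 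That is exactly the claim.

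The main obstacle I expect is the bookkeeping around \emph{why} $c_i$ is a genuine tangency point rather than a robber-exit-line artifact: if $c_i$ was defined because $\ell_i$ hit a robber exit line of $r_{i-1}$ (not a common tangent or the boundary), then the "left tangency" might be to a line rather than to the splinegon boundary, and I would need to argue that in Case 2(a) the relevant robber exit line at $r_{i-1}$ still corresponds to a bay whose defining vertex/curve sits on the left of $c_i b_{i+1}$, so that the segment is effectively tangent to the boundary there too (using the remark that every bay contains a region vertex). Disentangling the exit-line case from the common-tangent case, and checking the degenerate sub-possibility $b_{i+1}=c_i$, is where the care is needed; the rest is the routine taut-path argument.
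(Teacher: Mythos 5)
Your overall framing (a contradiction driven by the cop's stopping rule) points in the right direction, but the central mechanism you propose does not work, and the idea that actually closes the argument is missing. The taut-path/funnel reasoning applies to interior vertices of a single shortest path; here $c_i$ is the \emph{starting point} of the shortest path from $c_i$ to $r_i$, so the ``left turn at $c_i$'' is just the cop changing direction between rounds --- the concatenation $c_{i-1}\,c_i\,b_{i+1}\cdots r_i$ is not a taut path, and nothing forces a boundary tangency between $c_i$ and $b_{i+1}$ on that basis. Likewise, your fallback --- that $c_i$ lies on the common tangent or exit line that stopped the cop, hence the segment is ``tangent on the left at $c_i$'' --- conflates two different lines: the stopping line is a transversal crossing $\ell_i$ at $c_i$, tangent to the boundary at points on \emph{that} line, not along $c_i b_{i+1}$. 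And there is no a priori reason that some boundary piece must block the straight route and sit to the left of $c_i b_{i+1}$; the robber may simply be up and to the left with the wedge empty --- the lemma is precisely the claim that the stopping rule makes this impossible.

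The paper's proof supplies the missing idea: assume $c_i b_{i+1}$ is not tangent on its left, and continuously slide the cop's position from $c_i$ back toward $c_{i-1}$ along $\ell_i$ while keeping the segment from that point tangent to $\gamma_{i+1}$ (the curve the path turns right around at $b_{i+1}$). By the non-tangency assumption one can move a positive amount, and the motion must terminate either when the tangency point reaches an endpoint of $\gamma_{i+1}$ (an endpoint tangent) or when the sliding segment bumps into the boundary on its left (possibly because the moving point reaches $b_i$); either event is a common tangent crossing $\ell_i$ strictly before $c_i$, contradicting that $c_i$ was chosen as the \emph{first} such crossing past $b_i$. Without this sliding argument (or an equivalent), your write-up does not establish the left tangency, and the exit-line complication you flag at the end never needs to be disentangled --- the paper's argument produces a common tangent directly.
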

\begin{proof}
See Figure~\ref{fig:infl}.  
The segment $c_i b_{i+1}$ is tangent to the boundary curve $\gamma_{i+1}$ on its right side at point $b_{i+1}$. 
Suppose that segment $c_ib_{i+1}$ is not tangent to the boundary on its left side.  
We  show that the cop has passed a common tangent, which is a contradiction.
Move $c_i$ back towards $c_{i-1}$ while maintaining tangency with the curve $\gamma_{i+1}$.  We can move some positive amount. 
Either we reach the tangent at an endpoint of $\gamma_{i+1}$ 
or the segment $c_ib_{i+1}$ hits a boundary point on its left side (possibly because $c_i$ reaches $b_i$).  In either case, we  have arrived at a common tangent, so $c_i$ should have been placed here rather than further along. 
\end{proof}

\begin{figure}[ht]
\centering
\includegraphics[width=.3\textwidth]{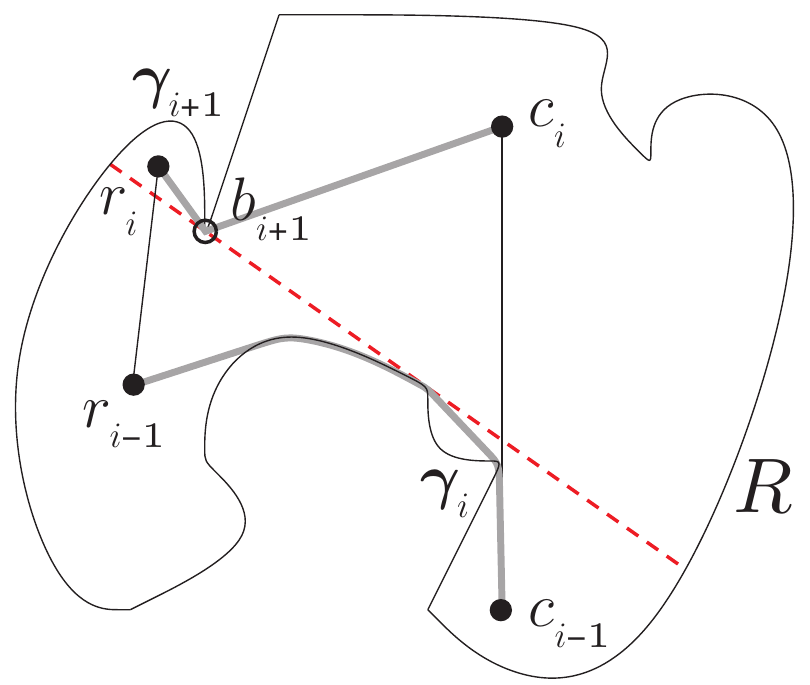}
\caption{In case 2(a) if the segment $c_i b_{i+1}$ is not tangent to the boundary on its left side then there is an earlier choice for $c_i$ (on the dashed red common tangent).}
\label{fig:infl}
\end{figure}

\begin{lemma}
\label{lem:exit}
Case 1(b) cannot occur, i.e., it cannot happen that the shortest path from $c_i$ to $r_i$ makes a left turn at $b_{i+1}$ and $c_{i+1}$ is right of the ray $c_{i-1} c_i$ by an angle in $[0,180^\circ)$.
\end{lemma}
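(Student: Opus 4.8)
The plan is to argue by contradiction: supposing Case~1(b) holds, I will show that the robber could not legally have moved from $r_{i-1}$ to $r_i$ in round $i$ --- i.e.\ that $r_{i-1}$ does not see $r_i$ --- which cannot happen during play. This is the curved counterpart of the polygonal impossibility depicted in Figure~\ref{fig:polygon-case1}(b). The new ingredient, as in Lemma~\ref{lem:infl} and in the discussion ruling out $180^\circ$ turns, is that a bend may be around a curve rather than a vertex, so the cop's stopping rule (common tangents, robber exit lines, and boundary contacts) has to enter; the contradiction will ultimately take the form ``then the cop should have stopped before reaching $c_i$''.

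First I would place the two robber positions relative to the ray $\ell_i$. Since the shortest path $c_{i-1}\to r_{i-1}$ turns left at $b_i$, the point $r_{i-1}$ lies in the bay that $\gamma_i$ hides from $c_{i-1}$; that bay, being swept on the left, lies weakly to the left of $\ell_i$, and hence so does $r_{i-1}$. Since the shortest path $c_i\to r_i$ turns left at $b_{i+1}$, the point $r_i$ lies in the bay just beyond $b_{i+1}$ hidden from $c_i$ --- a bay which, like every bay, contains a vertex of $R$. In Case~1(b) the point $c_{i+1}$ lies to the right of the ray $c_{i-1}c_i=\ell_i$, and $b_{i+1}$ lies on the segment from $c_i$ to $c_{i+1}$ (or equals $c_{i+1}$), so $b_{i+1}$ lies on $\ell_i$ or, generically, strictly to its right. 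The degenerate case in which $b_{i+1}$ lies on $\ell_i$ itself (this includes the collinear configuration $c_{i-1},c_i,c_{i+1}\in\ell_i$) is handled separately, by a direct argument that a single straight segment cannot both leave the bay behind $b_i$ and enter the bay beyond $b_{i+1}$ when the two bends lie on the common line $\ell_i$.

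So $r_{i-1}$ is weakly left of $\ell_i$ while $r_i$ lies in the bay just beyond $b_{i+1}$, whose mouth $b_{i+1}$ is strictly right of $\ell_i$ and which is hidden from $c_i$. Because $r_{i-1}$ nonetheless sees $r_i$, the segment $r_{i-1}r_i$ must graze the obstacle bounding that bay --- the curve $\gamma_{i+1}$, or the vertex the bay contains. Rotating $r_{i-1}r_i$ about $r_{i-1}$ toward that obstacle until it first meets the boundary of $R$ yields a segment $e$ from $r_{i-1}$ tangent to $R$, whose tangent point is on the far (right) side of $\ell_i$; thus $e$ is a robber exit line for round $i$, or, if $e$ first meets some other curve, a common tangent. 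Now $e$ crosses $\ell_i$ --- its endpoint $r_{i-1}$ is on the left, its tangent point on the right --- and it crosses $\ell_i$ strictly beyond $b_i$, since the robber cannot cross $\bar\ell_i$, so $r_{i-1}r_i$, and $e$ for a sufficiently small rotation, avoids the portion of $\ell_i$ at or below $b_i$. The crux is that $e$ in fact crosses $\ell_i$ strictly before $c_i$; granting this, the cop walking up $\ell_i$ from $c_{i-1}$ would have met $e$ --- a robber exit line or common tangent --- before reaching $c_i$, contradicting the choice of $c_i$ as the first such stopping point past $b_i$. The symmetric configuration, in which $c_{i-1}\to r_{i-1}$ turns right at $b_i$, follows by reflection.

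The step I expect to absorb most of the work is precisely this last assertion, together with the claim that the tangent point of $e$ lies on the far side of $\ell_i$ so that $e$ genuinely counts as a robber exit line of the kind the cop's rule makes it stop at. The reasoning is the same left/right bookkeeping as in the proof of Lemma~\ref{lem:infl} --- tracking how the grazing tangent pivots as its basepoint varies --- now applied to the sight line from $r_{i-1}$ instead of to the cop's first segment $c_ib_{i+1}$, and it relies on the fact that every bay contains a vertex of $R$ so that the tangent in question exists.
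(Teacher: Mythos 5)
Your overall target is the right one---derive a contradiction by exhibiting a common tangent or robber exit line that crosses $\ell_i$ strictly between $b_i$ and $c_i$, so that the cop's stopping rule would have halted it earlier---but the mechanism you propose for producing that line does not deliver the goods, and the part you defer is not a technical loose end but the entire content of the lemma. Concretely: (1) Rotating the sight segment $r_{i-1}r_i$ about $r_{i-1}$ until it grazes the boundary gives you \emph{some} tangent from $r_{i-1}$, but nothing in the construction controls where that tangent meets $\ell_i$ (if it meets it at all), nor whether its tangent point sits on the correct side to qualify as a robber exit line. You acknowledge this yourself, but without it there is no contradiction, since a tangent from $r_{i-1}$ that crosses $\ell_i$ at or beyond $c_i$, or that fails the exit-line condition, is perfectly consistent with the cop's choice of $c_i$. (2) Your fallback ``if $e$ first meets some other curve, a common tangent'' is incorrect: a segment from the interior point $r_{i-1}$ with a single point of tangency on the boundary is not a common tangent (those require two tangencies, or are endpoint tangents), so if the rotation terminates on the ``wrong'' obstacle you get neither an exit line nor a common tangent and the argument collapses. (3) The placement claims at the start ($r_{i-1}$ weakly left of $\ell_i$, $b_{i+1}$ weakly right of it) are plausible but unproven, and the collinear subcase is waved off.

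The paper's proof sidesteps all three difficulties by running the sweep from the cop's side rather than the robber's: it first shows that the segments $r_{i-1}r_i$ and $c_{i-1}c_i$ genuinely intersect (using that a right turn at the crossing point would contradict the Case~1 hypothesis of a left turn at $b_{i+1}$), then slides a pivot $p$ from $c_i$ back toward $b_i$ \emph{along $\ell_i$ itself} while maintaining a line $\tau$ through $p$ tangent to $\gamma_{i+1}$ and anchored on the shortest path $\sigma$ from $r_{i-1}$ to $c_i$. Because the pivot lives on $\ell_i$ between $b_i$ and $c_i$ throughout, the ``crosses $\ell_i$ before $c_i$'' condition is automatic, and each terminating event (reaching an endpoint tangent of $\gamma_{i+1}$, passing through $r_{i-1}$, becoming tangent to $\sigma$, or bumping the boundary) yields a line with two tangencies or through $r_{i-1}$---a bona fide common tangent or exit line. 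If you want to salvage your robber-side rotation, you would have to prove an analogue of this invariant for the pivot at $r_{i-1}$, which is essentially the same work in a less convenient coordinate system.
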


\begin{proof}
Suppose the situation does occur.  See Figure~\ref{fig:exit}.  We  show that the cop has passed a common tangent or a robber exit line, which gives a contradiction.
Because $c_{i+1}$ is to the right of the ray $c_{i-1} c_i$, therefore the robber's move $r_{i-1}r_i$ must have crossed the line  through $c_{i-1} c_i$, say at point $x$.  
We claim that segments $r_{i-1}r_i$ and $c_{i-1} c_i$ intersect. 
First note that $x$ lies after $b_i$ along the ray $c_{i-1} c_i$.  We must show that $c_i$ lies after $x$ along this ray. 
If $c_i$ lies before $x$, then there is a two-link path inside the region, $c_i, x, r_i$ that turns right at $x$.  Shortening this to a locally shortest path, we obtain the shortest path from $c_i$ to $r_i$ that makes a first turn to its right, contradicting our assumption. 

\begin{figure}[ht]
\centering
{
\subfigure[]{\includegraphics[width=.3\textwidth]{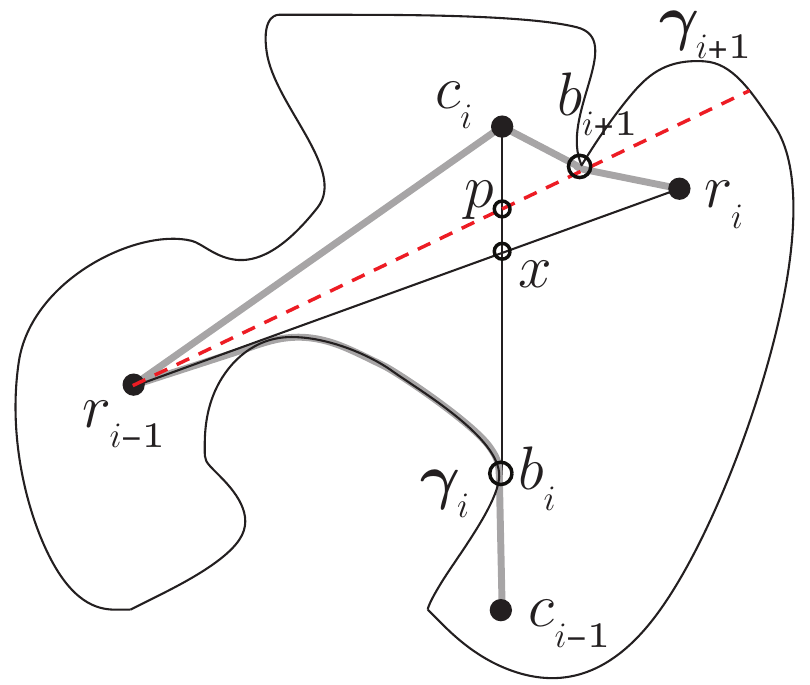}\label{fig:exit}}
\hspace{.5in}
\subfigure[]{\includegraphics[width=.3\textwidth]{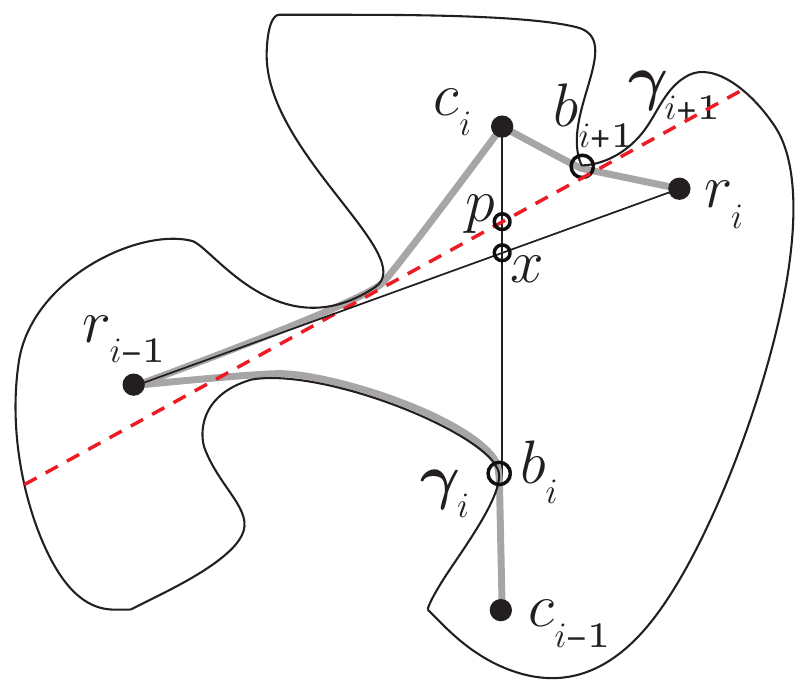}\label{fig:case1b-robber}}
}
\caption{Case 1(b) cannot occur in these situations: moving $p$ from $c_i$ towards $b_i$ while maintaining tangency of $\tau$ (dashed red) with $\gamma_{i+1}$ we encounter: 
\subref{fig:exit} a robber exit line or
\subref{fig:case1b-robber} a common tangent.
}
\label{fig:case1b}
\end{figure}

Define $\sigma$ to be the shortest path from $r_{i-1}$ to $c_i$. 
The segment $c_i b_{i+1}$ is tangent to the curve $\gamma_{i+1}$.  We will now
move point $p$ from $c_i$ towards $b_i$, maintaining a segment $\tau$ through $p$ tangent to the curve $\gamma_{i+1}$.  In the other direction, $\tau$ extends to its intersection point with $\sigma$.
See Figures~\ref{fig:case1b} and \ref{fig:case1b-middle-infl}, where $\tau$ is drawn as a dashed red line.
If $\tau$ reaches an endpoint tangent of $\gamma_{i+1}$ then we have a common tangent and the cop should have stopped at point $p$.
Otherwise, the segment $\tau$ must at some point lose contact with $\sigma$ and we claim that this can happen only because of one of the following:
 \begin{itemize}
\item The segment $\tau$ intersects $\sigma$ at $r_{i-1}$; this is a robber exit line.  See Figure~\ref{fig:exit}.
\item The segment $\tau$ becomes tangent to $\sigma$; this is a common tangent.  See Figure~\ref{fig:case1b-robber}.
\item The segment bumps into the region boundary (possibly at point $b_i$); this is a common tangent. 
See Figure~\ref{fig:case1b-middle-infl}. 
\end{itemize}
In all cases, the cop should have stopped at point $p$ because of the common tangent or robber exit line $\tau$. 
\end{proof}
%
\remove{
The segment $c_i b_{i+1}$ is tangent to the curve $\gamma_{i+1}$.  We will now
move point $p$ from $c_i$ towards $b_i$, maintaining tangency with the curve $\gamma_{i+1}$.
We extend the segment past $p$ to the point where it reaches the segment $r_{i-1}c_i$.
If any of the following events occur, we would have placed $c_i$ at $p$:
\begin{itemize}
\item The segment reaches an endpoint tangent of $\gamma_{i+1}$: this is an inflection line.
\item The extension of the segment passes through $r_{i-1}$: this is a robber exit line.  See Figure~\ref{fig:case1b-robber}.
\item The segment bumps into the boundary of the region: this is an inflection line.  See Figure~\ref{fig:case1b-bump}.
\end{itemize}
Otherwise $p$ reaches $b_i$, and this is an inflection line, so we would have placed $c_i$ at $b_i$.  See Figure~\ref{fig:case1b-b}.
} 

\begin{figure}[ht]
\centering
{
\subfigure[]{\includegraphics[width=.3\textwidth]{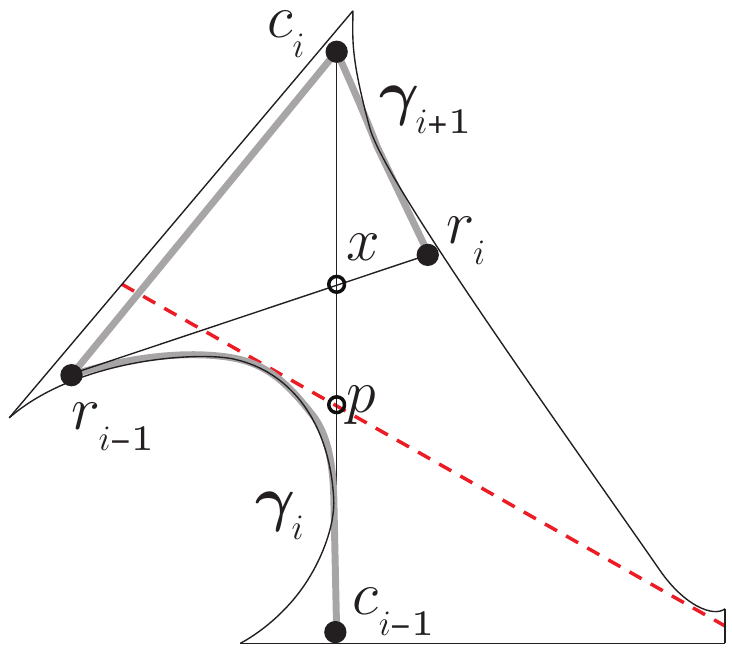}\label{fig:case1b-bump}}
\hspace{.5in}
\subfigure[]{\includegraphics[width=.3\textwidth]{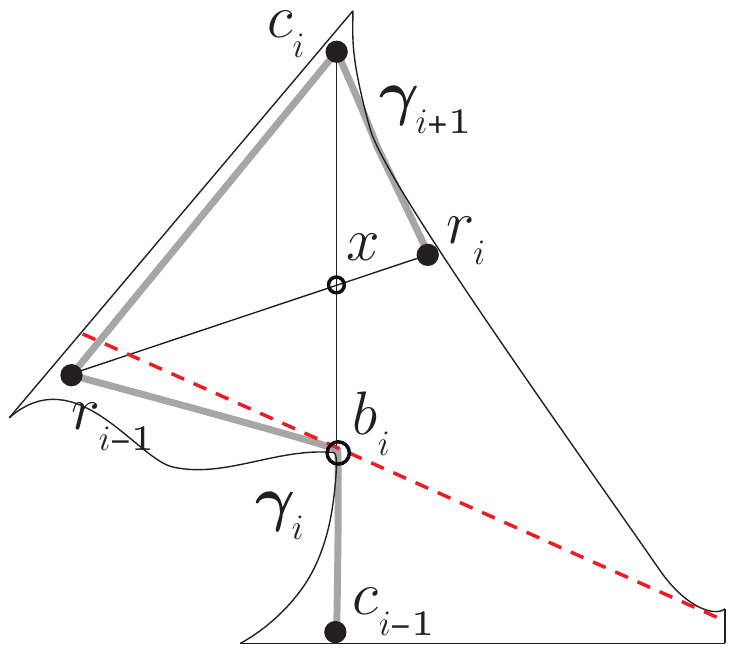}\label{fig:case1b-b}}}
\caption{Case 1(b) cannot occur in these situations: moving $p$ from $c_i$ towards $b_i$ while maintaining tangency of $\tau$ (dashed red) with $\gamma_{i+1}$ we encounter a common tangent by bumping into the region boundary 
\subref{fig:case1b-bump} before $p$ reaches $b_i$, or
\subref{fig:case1b-b} at $b_i$.
}
\label{fig:case1b-middle-infl}
\end{figure}

%

\remove{ 
Assume that the robber exits the bay, which means that it changes sides with
respect to the cop's vertical motion, and that the $c$ to $r$ shortest path
makes a bend. If this bend is to the left, then the cop's motion also
changed side of the e robber of the robber's, since the cop enters $R$ from
below, and ends on the $c$ to $r$ path which lies above the robber's motion,
since it turns left. Therefore, assume the motion segments of $c$ and $r$
intersect. We must show that the cop crossed an inflection or exit edge.

The visibility polygon of $c$ defines a bay containing~$r$. Either that bay
contains a lid of~$R$ whose exit line intersects the motion of~$c$ above the
motion of~$r$, or a portion of a curve $\alpha$ that is invisible from $c$
at the bend was visible from $r$. As in the previous proof, moving $c$ back
and maintaining tangency with $\alpha$ must reach an inflection edge---in
this case, since there is no exit line, by having the point of tangency
reach the endpoint of $\alpha$ before $c$ reaches the motion segment of $r$.
}

We are now ready to show that the active region shrinks.  

\begin{lemma}\label{lem:growing}
The active regions satisfy $R_{i + 1} \subsetneq R_i$.
\end{lemma}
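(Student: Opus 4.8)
The plan is to follow the template of the four-case analysis in the proof of Theorem~\ref{thm:polygon}, replacing ``the polygon edge that forces the cut'' by ``the reflex boundary curve $\gamma_i$'' and using the two preceding lemmas to prune and structure the cases. By the standing assumption the shortest path from $c_{i-1}$ to $r_{i-1}$ turns left at $b_i$, so $\bar\ell_i$ runs from $b_i$ back through $c_{i-1}$ along the line of $\ell_i$ and on to the splinegon boundary, and $R_i$ is the side of $\bar\ell_i$ containing $r_{i-1}$ (equivalently, the side containing $r_i$, since the robber cannot cross $\bar\ell_i$ in round $i$). Lemma~\ref{lem:exit} deletes Case~1(b) outright, and Lemma~\ref{lem:infl} supplies the extra structure needed in Case~2(a) (there $c_ib_{i+1}$ is tangent to the boundary on both sides); so only Cases~1(a), 2(a) and 2(b) survive, and in each the magnitude of the cop's turn at $c_i$ is strictly less than $180^\circ$ by the ``no doubling back'' observation preceding these lemmas.

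First I would isolate three facts that hold regardless of the case. (i) The points $c_{i-1}, b_i, c_i$ are collinear and occur in this order along the line of $\ell_i$; in particular $\bar\ell_i$ lies on the $b_i$-to-$c_{i-1}$ side, disjoint from the ray from $b_i$ through $c_i$. (ii) $R_i$ is the closed piece of $R$ on one side of the straight chord $\bar\ell_i$, so a shortest path in $R$ between two points of $R_i$ may be taken to remain in $R_i$: any sub-path crossing the chord can be replaced by a sub-segment of $\bar\ell_i$ without increasing length. Hence $c_i\in R_i$ (the cop's move $c_{i-1}b_ic_i$ runs along $\bar\ell_i$ and then into $R_i$), and the whole shortest path from $c_i$ to $r_i$, so in particular $b_{i+1}$ and the segment $c_ib_{i+1}$, lies in $R_i$. (iii) Because the cop turns at $c_i$ by less than $180^\circ$, the continuation of $\bar\ell_{i+1}$ beyond $c_i$ leaves $c_i$ into the open half-plane of the line of $\ell_i$ opposite the one containing $b_{i+1}$ (or, in a degenerate configuration, along that line); since a straight segment leaving a point of a line into one open half-plane cannot afterwards meet a portion of that line, $\bar\ell_{i+1}$ does not cross the relative interior of $\bar\ell_i$.

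Granting these, $\bar\ell_{i+1}$ is a chord of $R_i$: both endpoints lie on $\partial R\cap R_i$, and by (ii) and (iii) it stays inside $R_i$. It therefore cuts $R_i$ into two pieces, and the piece containing $r_i$ is precisely $R_{i+1}$ (its boundary arcs coincide with arcs of $\partial R$ lying in $R_i$), giving $R_{i+1}\subseteq R_i$. For strictness I would, case by case as in Theorem~\ref{thm:polygon}, point to a nonempty sliver of $R_i$ on the far side of $\bar\ell_{i+1}$: in Cases~1(a) and 2(b) the chord $\bar\ell_{i+1}$ passes through $c_i$, which is at or beyond $b_i$ on $\ell_i$, and so strictly shears off at least the boundary arc of $\gamma_{i+1}$ just past $b_{i+1}$ together with (when $c_i\ne b_i$) the point $b_i$; in Case~2(a), Lemma~\ref{lem:infl} guarantees $c_ib_{i+1}$ is tangent to $\partial R$ on its left, so $\bar\ell_{i+1}$ together with this contact cuts off a genuine region adjacent to $\bar\ell_i$.

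The main obstacle is Case~2(b), together with the tangential degeneracies buried in the stopping rule for $\bar\ell_{i+1}$. In Case~2(b) the continuation of $\bar\ell_{i+1}$ beyond $c_i$ heads back toward the line of $\ell_i$, and when the shortest path from $c_i$ to $r_i$ happens to start out parallel to $\ell_i$ this continuation can run along that line and overlap $\bar\ell_i$; one must then argue, in the spirit of Lemma~\ref{lem:exit} (slide a tangent back toward $c_i$ until it hits an earlier common tangent or robber exit line, or compare the two stopping rules directly), that either this configuration cannot arise or $R_{i+1}$ still strictly shrinks. One must also verify that $\bar\ell_{i+1}$ passing tangentially through an isolated boundary point before its true endpoint does not let it slip briefly out of $R_i$. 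The remaining parts of all cases are faithful analogues of the polygonal argument and should go through with only changes of vocabulary.
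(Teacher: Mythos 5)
Your proposal follows essentially the same route as the paper's proof: the same four-case split with Lemma~\ref{lem:exit} eliminating Case~1(b) and Lemma~\ref{lem:infl} supplying the tangency needed in Case~2(a), then showing $\bar\ell_{i+1}$ is a chord contained in $R_i$ and exhibiting a strictly sheared-off piece (the paper uses the open segment $c_{i-1}c_i$ for strictness). The degeneracies you flag (tangential boundary touches of $\bar\ell_{i+1}$, the $0^\circ$-turn sub-case of 2(b)) are also left implicit in the paper's own argument, so your plan is a faithful, if slightly more cautious, version of it.
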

\begin{proof}
Assume that the shortest path from $c_{i-1}$ to $r_{i-1}$ makes a left turn at $b_{i}$, and consider the  cases as listed above.  

\noindent{\bf Case 1(a).}  The shortest path from $c_i$ to $r_i$ makes a left turn at $b_{i+1}$ and $c_{i+1}$ is left of the ray $c_{i-1} c_i$.  See Figure~\ref{fig:1a}. 
The ray $\bar \ell_{i+1}$ from $b_{i+1}$ through $c_i$ intersects $\ell_i$ at $c_i$, and is therefore completely contained in the active region $R_{i}$.  Furthermore, the open segment $c_{i-1}c_i$ is outside $R_{i+1}$ but inside $R_i$.
Thus $R_{i+1} \subsetneq R_i$.

\begin{figure}[ht]
\centering
\includegraphics[width=.4\textwidth]{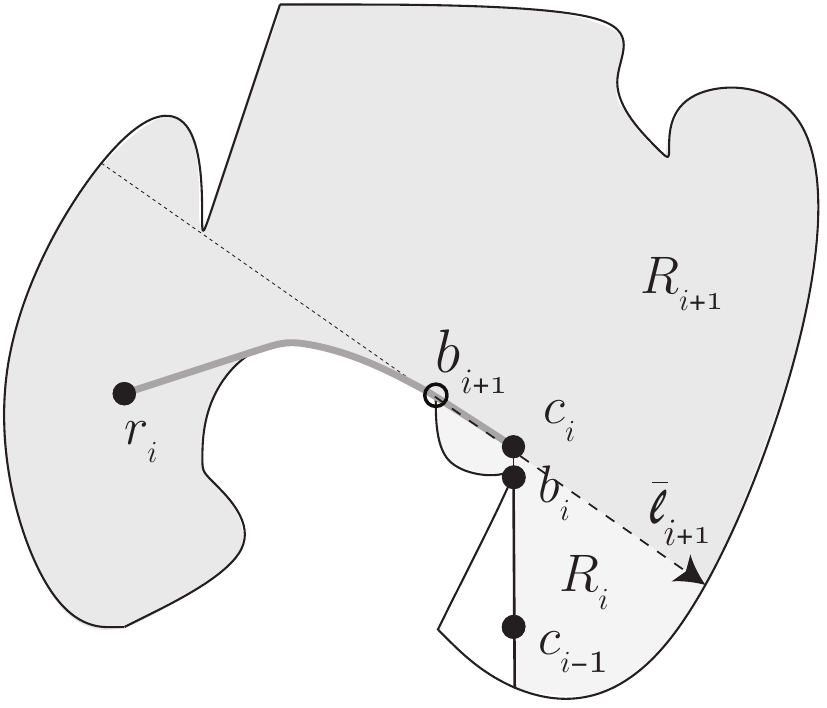}
\caption{Case 1(a).}
\label{fig:1a}
\end{figure}

\noindent{\bf Case 1(b).}  The shortest path from $c_i$ to $r_i$ makes a left turn at $b_{i+1}$
 and $c_{i+1}$ is right of the ray $c_{i-1} c_i$.
This case cannot occur by Lemma~\ref{lem:exit}.

\noindent{\bf Case 2(a).}  The shortest path from $c_i$ to $r_i$ makes a right turn at $b_{i+1}$ and $c_{i+1}$ is left of the ray $c_{i-1} c_i$.  See Figure~\ref{fig:2a}.
By Lemma~\ref{lem:infl}, 
the segment $c_ib_{i+1}$ is tangent to the boundary on its left side, say at point $p$.
The ray $\bar \ell_ {i+1}$ that defines the active region extends from $b_{i+1}$ to $p$.  It's extension 
goes through $c_i$, so it
is contained in $R_i$.
Furthermore, the open segment $c_{i-1}c_i$ is outside $R_{i+1}$ but inside $R_i$. 
Therefore $R_{i+1} \subsetneq R_i$.

\begin{figure}[ht]
\centering
{
\subfigure[]{\includegraphics[width=.35\textwidth]{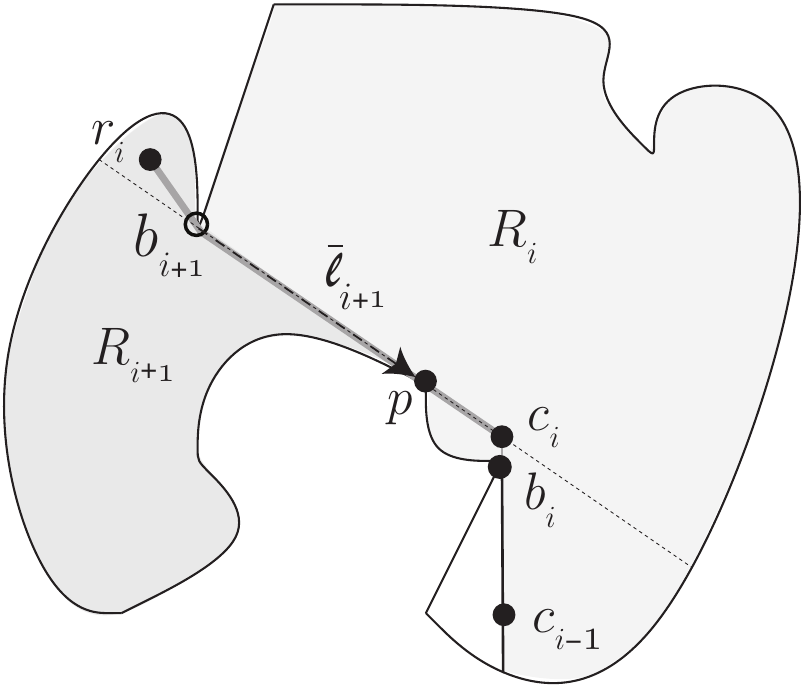}\label{fig:2a}}
\hspace{.5in}
\subfigure[]{\includegraphics[width=.35\textwidth]{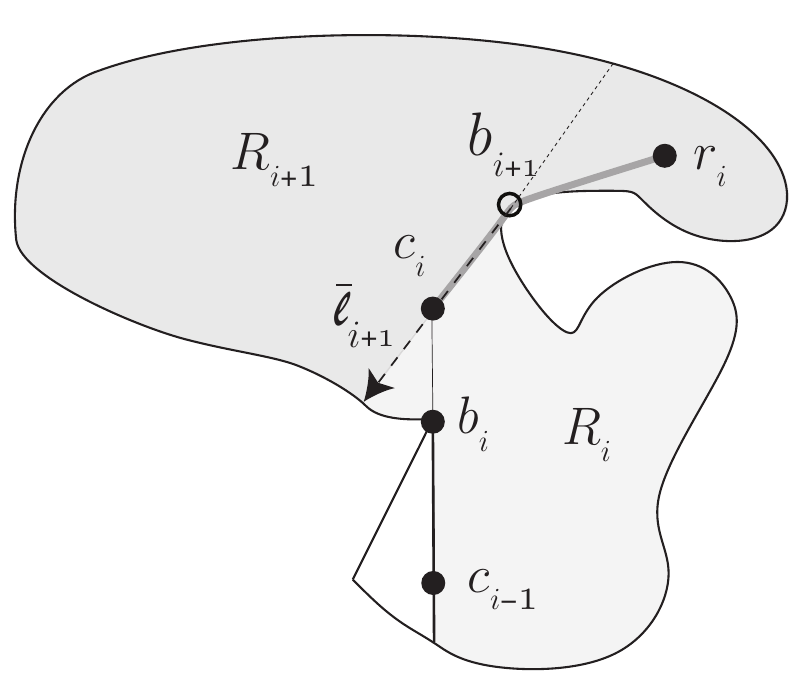}\label{fig:2b}}}
\caption{
\subref{fig:2a} Case 2(a).
\subref{fig:2b} Case 2(b).}
\end{figure}

\noindent{\bf Case 2(b).}  The shortest path from $c_i$ to $r_i$ makes a right turn at $b_{i+1}$ and $c_{i+1}$ is right of the ray $c_{i-1} c_i$ (or on the ray).  See Figure~\ref{fig:2b}. 
The ray $\bar \ell_ {i+1}$ intersects $\ell_i$ at $c_i$, and is contained in $R_{i}$.  
Furthermore, the open segment $c_{i-1}c_i$ is outside $R_{i+1}$ but inside $R_i$.  Thus $R_{i+1} \subsetneq R_i$.
\end{proof}


\remove{
Depending on the robber's move at step $i$ from $r_{i-1}$ to $r_i$ and the shortest path from $c_i$ to $r_i$ in step $i+1$ we have following cases:

Case 1: 
Robber remains in the bay defined by the vertical lid from $c_{i-1}$, and the $c_i$ to $r_i$ shortest path 
first bends right. This case may not happen according to Lemma~\ref{lem:infl}.

Case 2:
Robber remains in the bay defined by the vertical lid from $c_{i-1}$, and the $c_i$ to $r_i$ shortest path first bends left. 
In this case, $\ell_i$ is fully contained in region excluded by $\ell_{i+1}$ and region $R_i$ has shrunken. See Figure~\ref{fig:shrink}\rednote{TODO:add fig}.

Case 3: 
Robber moves outside the bay defined by the vertical lid from $c_{i-1}$, and the $c_i$ to $r_i$ shortest path first bends left.
This case is impossible according to lemma~\ref{lem:exit}.

Case 4:
Robber moves outside the bay defined by the vertical lid from $c_{i-1}$, and the $c_i$ to $r_i$ shortest path first bend right.
It this case, $\ell_i$ is fully contained in region excluded by $\ell_{i+1}$, so the region that the robber may move has shrunken. See Figure~\ref{fig:shrink}.
} 

\remove{ 
\begin{lemma}\label{lem:not-reusable}
The cop may not stop on a inflection line that enters exclusive region.
\end{lemma}

\begin{proof}
Assume by contradiction that in step $i$ the cop moves from $c_{i-1}$ to $c_i$, where an inflection line $f$ intersects $c_{i-1}c_i$, and $f$ also enters exclusive region. Segment $\ell_i$ shares the same line as $c_{i-1}c_i$ and $c_i\notin \ell_i$ by definition. If $f$ comes from the exclusive region it has to intersects $\ell_i$ because $\ell_{i-1}$ is fully behind $\ell_i$ by lemma~\ref{lem:growing}. It means that the line of $\ell_i$ should intersects inflection line $f$ in two different points that is impossible. 
\end{proof}
}

Finally we prove our main result.

\begin{theorem}
Inside a splinegon of $n$ curve segments with link diameter $d$, the
cop wins the cops and robbers game in 
$O(n^2 + d)$ moves. 
\end{theorem}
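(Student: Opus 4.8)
The plan is to combine the monotone shrinking of the active region from Lemma~\ref{lem:growing} with a charging argument that bounds the number of rounds before the cop sees the robber. Recall that each active region $R_i$ is a simply connected subregion of $R$ bounded by the single chord $\bar\ell_i$ together with an arc of $\partial R$, and that $R_{i+1}\subsetneq R_i$. My first step would be to make precise the sense in which the exclusion regions \emph{grow}: in every one of the four cases behind Lemma~\ref{lem:growing} the open segment $c_{i-1}c_i$ lies outside $R_{i+1}$, and $\bar\ell_{i+1}$ passes through $c_i$; from this I would deduce that the exclusion regions are nested and expanding, so that once a splinegon vertex, or a common tangent of $R$, has been separated from the active region by some $\bar\ell_j$, it never again meets $R_k$ for $k\ge j$.

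Next I would classify each non-final round $i$ by the reason the cop stops at $c_i$: (i) $c_i$ is a splinegon vertex; (ii) $\ell_i$ first meets, at $c_i$, a common tangent of $R$ that still met $R_i$; (iii) $\ell_i$ first meets, at $c_i$, a robber exit line; or (iv) $\ell_i$ touches $\partial R$ at $c_i$ with none of (i)--(iii) occurring, i.e.\ the cop ``creeps'' along a tangent to a reflex curve. For types (i) and (ii) the monotonicity above shows the responsible feature (a vertex, resp.\ a common tangent) leaves the active region permanently, so there are $O(n)$ rounds of type (i) and, since a splinegon has $O(n^2)$ common tangents, $O(n^2)$ rounds of type (ii). For type (iii) I would use that every bay cut off by a robber exit line contains a vertex of $R$: the cop's stop at $c_i$ places that bay, hence its vertex, inside the next exclusion region, so each such vertex is charged at most once and there are $O(n)$ rounds of type (iii). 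Some care is needed here because robber exit lines move with the robber; the point is that they are charged to fixed vertices through their bays, and a once-excluded vertex is never re-exposed.

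The remaining type-(iv) ``creep'' rounds are the crux, and I expect bounding them to be the main obstacle; they are exactly the phenomenon of Figure~\ref{fig:crescent-moon}, where no combinatorial feature is consumed and only a metric quantity improves. The quantity I would track is the link distance $\Phi_i=\mathrm{linkdist}(c_i,r_i)$. Moving along the first segment of a shortest path never increases the link distance to a fixed target, and on a creep round the cop additionally ``consumes'' one tangency of the shortest path, so its move lowers $\Phi$ by at least one; the robber's reply raises $\Phi$ by at most one. I would argue that a robber move which actually lengthens the shortest path forces the following cop move to be of combinatorial type (consuming a fresh vertex or common tangent), so that $\Phi$ strictly decreases on every type-(iv) round while it increases only on the $O(n^2)$ combinatorial rounds and by $O(1)$ each. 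Telescoping then bounds the number of type-(iv) rounds by $O(n^2+d)$, and summing the four bounds gives $O(n^2+d)$ rounds in total, after which the cop sees and catches the robber. The delicate points I expect to have to nail down are: that a creep round strips exactly one link and the robber can recover at most one; that path-lengthening robber moves are ``paid for'' by an immediately following combinatorial round, so that the potential genuinely telescopes; and the uniform $O(1)$ bound on how much a single round can raise $\Phi$.
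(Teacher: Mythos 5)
Your combinatorial bookkeeping for rounds of types (i)--(iii) matches the paper's argument almost exactly: the paper defines the disjoint newly excluded regions $E_i = R_i \setminus R_{i+1}$ and charges a stop at a vertex, at a common tangent, or at a robber exit line to a vertex or tangent endpoint that lands in $E_i$ and (by Lemma~\ref{lem:growing}) can never be charged again, giving $O(n)+O(n^2)+O(n)$ such rounds. Two remarks on that part: the paper separates out, as its own event, the case where $\ell_i$ merely \emph{touches} the boundary at $c_i$ without exiting --- there $c_i$ and $b_i$ themselves form a common tangent, and since the pair $(b_i,c_i)$ never repeats this contributes another $O(n^2)$; your type (iv) silently absorbs this case, which you should split off. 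Also, for exit-line stops you need the small argument the paper gives that $r_{i-1}$ itself stays in $R_{i+1}$, so that the exit line's bay (hence its vertex) really does fall into $E_i$.

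The genuine gap is in your treatment of the residual ``creep'' rounds via the potential $\Phi_i=\mathrm{linkdist}(c_i,r_i)$. Two of your key claims are unsubstantiated and, I believe, false in general. First, the cop's creep move is along the first tangent of the \emph{Euclidean} geodesic to $r_{i-1}$, which need not be the first link of any minimum-link path to $r_{i-1}$; there is no reason such a move decreases the link distance to the robber by one (it can stay the same, and a priori could even increase by one, since all you know is that $c_i$ sees $c_{i-1}$). Second, the assertion that a robber move which lengthens the geodesic forces the \emph{next} cop round to be combinatorial is not argued and is hard to see: the robber can reposition within the same bay so that the geodesic still begins with a tangent to a reflex curve. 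Since $\Phi$ is a dynamic quantity depending on both players, it does not telescope without these claims. The paper avoids this entirely: it first shows (via Lemmas~\ref{lem:infl} and~\ref{lem:exit}, plus a direct argument ruling out Case~2(b)) that a stop where $\ell_i$ \emph{exits} the region can only occur in Case~1(a), and then charges that round to a bend of one \emph{fixed} minimum-link path $\sigma$ from the initial to the final cop position: that path must cross from outside $R_i$ into $R_{i+1}$ and hence bend inside $E_i$, and since the $E_i$ are disjoint and $\sigma$ has at most $d$ bends there are at most $d$ such rounds. Replacing your potential argument with this static charging (or supplying proofs of your two monotonicity claims, which I do not see how to do) is necessary to close the proof.
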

\begin{proof}
We  argue that at each step between the first and the last, the active region shrinks in some discrete way.  Define the \emph{newly excluded region}, $E_i$, to be $R_{i} - R_{i+1}$. 
In order to take care of collinearities, we will include the boundary of $R_{i+1}$ but exclude the boundary of $R_i$.  
If the two boundaries intersect, the intersection point is not included in $E_i$.
In Figures~\ref{fig:1a},~\ref{fig:2a}, and \ref{fig:2b} the region $E_i$ is lightly shaded. 
By Lemma~\ref{lem:growing}, the $E_i$'s are disjoint.

Let $\sigma$ be a minimum link path from the initial to the final cop position.  Then $\sigma$ has at most $d$ bends.  Note that there are at most $n^2$ common tangents because there are at most $n^2$ pairs of points determining common tangents.  Our bounds derive from these two facts.

Our plan is to show that at each step we make progress in one of the following ways:
\begin{enumerate}
\item $c_i$ is a vertex \label{event:c-vertex}
\item $c_i$ and $b_i$ define a common tangent \label{event:cb-tangent}
\item $E_i$ contains a vertex of the region \label{event:vertex}
\item $E_i$ contains an endpoint of a common tangent with both endpoints in $R_i$ \label{event:common-tangent}
\item $E_i$ contains a bend of $\sigma$ \label{event:bend}  
\end{enumerate}

\remove{
Our plan is to show that $E_i$ contains either: (1) a vertex of the region; or (2) an endpoint of a common tangent \rednote{that was used to stop the cop's motion}; or (3) a bend of every minimum link path from the initial to final cop positions. 
There is one exception to this---when $c_i$ and $b_i$ define a common tangent---which we will take care of later on. 
}

We begin by bounding the number of events of each of the above types.  After that we will show that one of these events occurs in each step.

\remove{
We begin by bounding the number of events of types (1), (2), and (3).  After that, we will show that one of these events occurs in each step, and deal with the possible exception. 
}

\remove{
We  show that 
either $c_i$ and $b_i$ define a common tangent or else
$E_i$ contains either: (1) a vertex of the region; or (2) an endpoint of a common tangent; or (3) a bend of the minimum link path from the initial to final cop positions.   
There are at most $O(n^2)$ pairs of points defining common tangents.  
Next we bound the number of events involving $E_i$.
}

Because $c_i$ and $b_i$ never repeat, event~(\ref{event:c-vertex}) happens at most $n$ times and event~(\ref{event:cb-tangent}) happens at most $n^2$ times.  
Because the $E_i$'s are disjoint, event~(\ref{event:vertex}) happens at most $n$ times, event~(\ref{event:common-tangent}) happens at most $n^2$ times, and event~(\ref{event:bend}) happens at most $d$ times.  

\remove{
Event (1) can happen at most $n$ times because there are $n$ vertices.  
Event (3) can happen at most $d$ times since that is the maximum number of bends in any minimum link path between any two points.  
We claim that event (2) happens at most $O(n^2)$ times because there are $O(n^2)$ common tangents and a common tangent 
with an endpoint in the exclusion region will never again be used to determine the cop's position. 
We will argue this for the previous cop move because that is easier to see in our figures.  In particular, we will show that if $c_i$ stopped at a common tangent $t$, then $t$ does not have an endpoint outside $R_i$ or on the boundary of $R_i$.  
Because $c_i$ stopped at common tangent $t$, thus $c_i \neq b_i$ and $t$ must cross $\ell_i$ at $c_i$.  
Since the boundary of $R_i$ is $\bar \ell_i$ which starts at $b_i$ and goes in the opposite direction but in the same line as $\ell_i$, therefore $t$ cannot have an endpoint outside $R_i$ or on its boundary.
}


It remains to prove that at each step, one of the above 5 events occurs.

\remove{
It remains to prove that at each step, one of event (1), (2), or (3) occurs, 
and to deal with the exception.
}

Recall the conditions for defining the cop's position $c_i$.  If $c_i=b_i$ because $b_i$ is a vertex then we have event~(\ref{event:c-vertex}). 
Otherwise
$c_i$ stops at a common tangent or a robber exit line or on the boundary of the region.  

Suppose first that $c_i$ stops on a common tangent, and not on the boundary (we will handle that case below).  Note that the common tangent must cross $\ell_i$, and therefore that both endpoints of the common tangent lie in $R_i$.   The boundary of $R_{i+1}$ is a line segment that goes through $c_i$ and therefore one endpoint of the common tangent must lie in $R_{i+1}$, possibly in the boundary.  This endpoint is in $E_i$.    Thus we have event~(\ref{event:common-tangent}).
\remove{If $c_i$ is on a common tangent,  then one endpoint of the common tangent must lie in the newly excluded region $E_i$, because the boundary of $E_i$ is a line segment going through $c_i$ and the common tangent line crosses this.}

If $c_i$ stops on a robber exit line then we claim that $E_i$ contains a vertex, i.e., event~(\ref{event:vertex}).  To justify this, first note that $r_{i-1}$ must lie in $R_{i+1}$.  This is because a straight segment joins $r_i$ and $r_{i-1}$ and $r_i$ lies on the far side of $\ell_{i+1}$, so in order for a straight  segment from $r_i$ to exit $R_{i+1}$ it would have to cross $\ell_{i+1}$ and $\bar \ell_{i+1}$ which is impossible.  
Therefore the robber exit line must exit $R_{i+1}$ at $c_i$ (or possibly lie in the line of $\ell_{i+1}$), and thus the tangent point of the robber exit line, and its bay, must lie in $E_i$.   As noted when we defined robber exit lines, this bay contains a vertex.  Therefore $E_i$ contains a vertex.    

Finally, we must consider the possibility that $c_i$ stops on the region boundary.   
When can this happen?
By Lemma~\ref{lem:infl}, the cop always stops at a common tangent in Case 2(a).  By Lemma~\ref{lem:exit}, Case 1(b) never occurs.  Thus we must be in Case 1(a) or 2(b).
If $c_i$ is not at a point where $\ell_i$ exits the region then $c_i$ and $b_i$ define a common tangent, i.e.~event~(\ref{event:cb-tangent}).
We are left with the case where $c_i$ is at a point where $\ell_i$ exits the region.
 In Case 2(b)  (see Figure~\ref{fig:2b}) the robber's move $r_{i-1} r_i$ must cross line $\ell_i$ beyond $c_i$, which is impossible if $\ell_i$ exits the region at $c_i$.  Thus we must be in Case 1(a). See Figure~\ref{fig:link-diameter}.  The minimum link path $\sigma$ from the initial cop position (outside $R_i$, or possibly on the boundary of $R_i$) to the final robber position (inside $R_{i+1}$) must include a bend point in $E_i$.   This is event~(\ref{event:bend}). 
\end{proof}

\begin{figure}[ht]
\centering
\includegraphics[width=.3\textwidth]{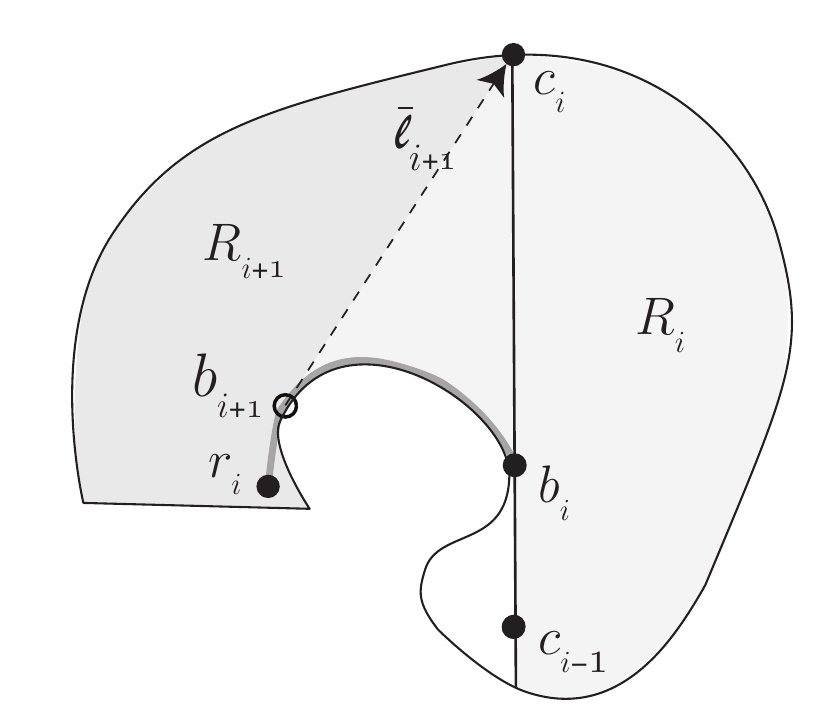}
\caption{When the cop stops on the region boundary in Case 1(a).}
\label{fig:link-diameter}
\end{figure}

%
\remove{
We consider different cases:
\begin{enumerate}
\item $E_i$ contains a vertex (curve endpoint). This happens every time $\ell_{i+1}$ has one or two endpoints on different curves than $\ell_i$.
\item $E_i$ contains a terminus of an inflection line. This happens every time the cop stops on an inflection line intersection.
The cop will never stop on the same inflection line, because it may not stop on an inflection line exiting $R_i$.
\item None of above conditions hold, i.e., the endpoints are on the same curves and cop has stopped because of an exit line or on the boundary.
We consider this situation in four cases we discussed before:
In case 2(a) always a curve endpoint is in $E_i$ so it would be filtered at item (1). Case 2(b) does not happen in this item, because we can show that either a curve endpoint is in $E_i$ or cop may not have stopped on robber's exit line. See Figure (?? \rednote{a figure to show it}). Case 1(b) never happens at all. Thus the only case we need to consider here is case 1(a) where the endpoints of $\ell_{i+1}$ are on the same curves as $\ell_i$. See Figure ??. Now, it is not possible to go from a point in $P - R_i$ to a point in $R_{i+1}$ without having a step in $E_i$, so the link distance from $c_0$ to active region is increased.
\end{enumerate}

\bluenote{Jack's version}
We bound the maximum number of cop moves by charging each move based on the new exclusion bay for the move after. The final two moves are charged against the cop catching the robber.

If the new exclusion bay lid has one or both endpoints on curves not in the previous exclusion bay, then the new exclusion bay has added the final portion of a curve to its endpoint, which can happen at most $2n$ times. I claim that shortest path cases 1 \bluenote{(2a)}
 and 2 \bluenote{(1a with bend at new curve (not $\gamma_i$))} always satisfy this condition. 
 For case 2, the tangency changes from $\gamma$ in a portion of curve that will become part of the exclusion bay. For case 1, the right turn in Figure 4.1 \bluenote{(Figure~\ref{fig:2a})}
  is either at an endpoint of a curve whose final portion is being added to the exclusion bay, or on a curve that must have an endpoint above the inflection line in the new exclusion bay.

This condition is also satisfied whenever a cop stops on an exit line---all path cases add the bay contributing the exit line to the exclusion bay. Just as argued for case 1, whether the exit line tangency is at an endpoint or curve, the last portion of a curve is added to the exclusion bay.

Consider when the cop stops on an inflection line and continues in case 3 \bluenote{(1a with bend still on $\gamma_i$)}
 or 4 \bluenote{(2b)}. 
 Neither endpoint of the inflection line could have been inside the previous exclusion bay, but because the new lid goes through $c$ to the boundary of $P$ in both directions, an inflection line endpoint is added to the exclusion bay. Since there are a quadratic number of inflection lines, this happens $O(n^2)$ times.

Finally, the cop can stop at the boundary and continue in case 3, so that the new portion of the exclusion bay (including the new lid and excluding the old) is bounded by two curves, curving the same way. Any straight-line path must have a vertex in this portion, so we charge this to the link diameter of $P$.
\rednote{Be more careful about link diameter.}
}

\remove{
We bound the maximum number of cop moves by charging each move based on the new exclusion region for the move after.  The final two moves are charged against the cop catching the robber.  

We first claim that the number of curves in $R_i$ is never more than $n+1$. That is because $\ell_i$ connects two points of the boundary and excludes all curves in between. In worst case it can possibly counted as a curve itself.
If $\ell_i$ has one or both endpoints on curves not in the previous exclusion region, then the new exclusion region has added the final portion of a curve to its endpoint, which can happen at most $2n$ times.  

We claim that if cop stops at $c_i$ because of an exit line intersection, this condition is always satisfied. 
That is because $r_{i-1}$ is visible to $c_i$ in this case. Either robber hides into a bay such that the shortest path from $c_i$ bends left or right, all the curves between the curve that form exit line and the curve that $b_i$ is on (that is non-empty) will be excluded. 

If cop stops at $c_i$ because of an inflection line intersection, then that inflection line will always have at least one endpoint in exclusion region, so it cannot be used again later. $\ell_i$ may not also add new active inflection line, because the only possible type of inflection line it can form is the tangents from new endpoints to curves which cannot be used later because will have an endpoint in exclusion region for all steps $j$, $j>i$.

Finally, the cop can stop at the polygon boundary and continue in case 3, so that the new portion of the exclusion bay (including the new lid and excluding the old) is bounded by two curves, curving the same way.  Any  straight-line path must have a vertex in this portion, so we charge this to the link diameter of~$R$.  \rednote{is it true}
\end{proof}
}

\section{Open Problems}

\begin{enumerate}
\item 
Consider the cops and robbers game on the points inside 
a polygonal region, i.e.,~a polygon with holes.  
There is a lower bound of three cops---an example requiring three cops can be constructed 
from a planar graph where three cops are required~\cite{Aigner-84} by taking a straight-line planar drawing of the graph and cutting out polygonal holes to leave narrow corridors for the graph edges.
It is an open question whether three cops suffice. 

\item 
\label{open:move-complexity}
What is the complexity of finding how many moves the cop needs for a given polygon/region?  
The graph version of this problem is solvable in polynomial time for cop-win graphs~\cite{Hahn-MacGillivray-06}.
For the cops and robbers game on the points inside a polygon we conjecture that the problem is solvable in polynomial time if the cop is restricted to the reflex vertices of the polygon.  
However, the cop may save by moving to an interior point---for example in a star-shaped polygon whose kernel is disjoint from the polygon boundary---so the problem seems harder if the cop is unrestricted.

\changed{
\item
Is there a lower bound of $\Omega(n^2 + d)$ on the worst case number of cop moves in a splinegon of $n$ curve segments and link diameter $d$?   From results in Section~\ref{sec:lower-bound} we have a lower bound of $\Omega(n+d)$. 
}


\end{enumerate}

\section{Acknowledgements}

The cops and robbers problem for points inside a region with a curved boundary was initially suggested by Vinayak Pathak and was posed in the Open Problem Session of CCCG 2013 by the third author of this paper.  
We thank all who participated in discussing the problem, especially David Eppstein.



\bibliographystyle{abbrv}
\bibliography{CR}


\clearpage
\begin{appendix}

 
\end{appendix}

\end{document}